\newcommand\coker{\mathop{{\rm coker}}}
\newcommand\p{\partial}
\renewcommand\sout{\bgroup \color{red} \ULdepth=-.5ex \ULset}
\newcommand{\be}{\begin{equation}}
\newcommand{\ee}{\end{equation}}
\newcommand{\bea}{\begin{eqnarray}}
\newcommand{\eea}{\end{eqnarray}}
\newcommand{\bsp}{\begin{split}}
\newcommand{\espl}{\end{split}}
\newcommand{\bpm}{\begin{pmatrix}}
\newcommand{\epm}{\end{pmatrix}}
\definecolor{mydarkred}{RGB}{233,20,35}
\definecolor{mypurple}{RGB}{120, 35, 160}
\definecolor{mydarkpurple}{RGB}{128, 100, 162}
\definecolor{mybrown}{RGB}{255, 195, 0}
\definecolor{myaqua}{RGB}{29, 153, 168}
\definecolor{myblue}{RGB}{91, 129, 184}  
\definecolor{mygreen}{RGB}{155, 187, 89}  
\definecolor{mybrightblue}{RGB}{0, 140, 255}  
\tikzstyle{species}=[
\tikzstyle{reaction}=[rectangle,draw=black!100,fill=black!15,thick, inner sep=0pt,minimum size=6mm]
\theoremstyle{remark}
\newtheorem{lemma}{Lemma}
\newtheorem{theorem}{Theorem}
\newtheorem*{thm*}{Theorem}
\newtheorem{proposition}[lemma]{Proposition}
\newtheorem{corollary}[lemma]{Corollary}
\newtheorem{assumption}[lemma]{Assumption}
\renewcommand\onecolumngrid{
\do@columngrid{one}{\@ne}%
\def\set@footnotewidth{\onecolumngrid}
\def\footnoterule{\kern-6pt\hrule width 1.5in\kern6pt}%
}
\renewcommand\twocolumngrid{
        \def\footnoterule{
        \dimen@\skip\footins\divide\dimen@\thr@@
        \kern-\dimen@\hrule width.5in\kern\dimen@}
        \do@columngrid{mlt}{\tw@}
}%
\begin{document}

\title{ 
Robust Perfect Adaptation of Reaction Fluxes Ensured by Network Topology
}

\author{Yuji~Hirono}
\email{yuji.hirono@gmail.com}
\affiliation{Asia Pacific Center for Theoretical Physics, Pohang 37673, Republic of Korea}
\affiliation{Department of Physics, POSTECH, Pohang 37673, Republic of Korea}
\affiliation{
RIKEN iTHEMS, RIKEN, Wako 351-0198, Japan
}

\author{Hyukpyo~Hong}
\email{hphong@kaist.ac.kr}
\affiliation{
Department of Mathematical Sciences, Korea Advanced Institute of Science and Technology, Daejeon 34141, Republic of Korea
}
\affiliation{
Biomedical Mathematics Group, Pioneer Research Center for Mathematical and Computational Sciences, Institute for Basic Science, Daejeon 34126, Republic of Korea
}

\author{Jae~Kyoung~Kim}
\email{jaekkim@kaist.ac.kr}
\affiliation{
Department of Mathematical Sciences, Korea Advanced Institute of Science and Technology, Daejeon 34141, Republic of Korea
}
\affiliation{
Biomedical Mathematics Group, Pioneer Research Center for Mathematical and Computational Sciences, Institute for Basic Science, Daejeon 34126, Republic of Korea
}

\date{\today}

\begin{abstract}
Maintaining stability in an uncertain environment is essential for proper functioning of living systems. Robust perfect adaptation (RPA) is a property of a system that generates an output at a fixed level even after fluctuations in input stimulus without fine-tuning parameters, and it is important to understand how this feature is implemented through biochemical networks. 
The existing literature has mainly focused on RPA of the concentration of a chosen chemical species, and no generic analysis has been made on RPA of reaction fluxes, that play an equally important role.
Here, we identify structural conditions on reaction networks under which all the reaction fluxes exhibit RPA against the perturbation of the parameters inside a subnetwork. Based on this understanding, we give a recipe for obtaining a simpler reaction network, from which we can fully recover the steady-state reaction fluxes of the original system. This helps us identify key parameters that determine the fluxes and study the properties of complex reaction networks using a smaller one without losing any information about steady-state reaction fluxes. 
\end{abstract}

\maketitle

\emph{Introduction.}--How to keep a stable status in a changing environment is a vital issue for biological systems~\cite{kitano2004biological,kitano2007towards}. 
One strategy that living cells adopt for maintaining stability is robust perfect adaptation (RPA)~\cite{barkai1997robustness,alon1999robustness,FERRELL201662,aoki2019universal,KHAMMASH2021509}, which is a property of a system that maintains the levels of certain quantities by counteracting the effect of disturbances inside biochemical reaction networks~\cite{10.1093/nar/28.1.27,jeong2000large,ravasz2002hierarchical}. 
From a control-theoretical viewpoint, RPA can be achieved through integral feedback control~\cite{yi2000robust,8619101,KHAMMASH2021509}, 
and there have been studies on how to implement this within biochemical reaction networks
for a chosen set-point concentration~\cite{araujo2018topological,wang2021structure,Gupta2022.02.01.478605}. 
As an alternative perspective, a topological criterion is developed to identify subnetworks whose parameters are irrelevant to the steady-state properties of the rest of the network~\cite{PhysRevLett.117.048101,PhysRevE.96.022322,PhysRevResearch.3.043123} (see also Refs.~\cite{MOCHIZUKI2015189,doi:10.1002/mma.3436,doi:10.1002/mma.4668}).
In other words, the rest of the network exhibits RPA 
with respect to the perturbation of reaction parameters inside subnetworks that satisfy certain topological conditions. 

So far, the study of RPA has focused mainly on the adaptation of the concentration of a fixed chemical species. However, in many biological contexts, reaction fluxes play an equally important role. For example, they are used as a measure of biomass production or cell growth~\cite{orth2010flux}. 
Despite this, a comprehensive analysis of RPA of reaction fluxes in relation to the underlying network structure has so far been elusive. 
In this Letter, we uncover the topological criterion for identifying the parameters to which {\it all} the reaction fluxes exhibit RPA for generic chemical reaction systems.
We further show that the steady-state reaction fluxes can be reconstructed from those of a simpler reduced system, which is easier to analyze. 
Namely, the reduced system is the smallest faithful representation of the original system in the sense that the steady-state reaction fluxes can be fully reconstructed. 
We will demonstrate the method with simple hypothetical examples as well as a realistic network of the metabolic pathways of {\it Escherichia coli} ({\it E.~coli}). 
Our finding allows us to identify the parameters that are relevant in determining the reaction fluxes in a possibly complex reaction network and helps us to understand the behavior of the system. 
As an advantage of the present method, we stress that we do not assume any particular reaction kinetics; hence, the results are broadly applicable to generic reaction systems.

\emph{Chemical reaction systems.}--We consider a deterministic chemical reaction system based on a reaction network $\Gamma = (V, E)$, where $V$ and $E$ are sets of chemical species and reactions, respectively. Reaction $e_A \in E$ can be specified in the form 
\begin{equation}
  e_A :
 \sum_i y_{iA} v_i \to 
 \sum_i \bar y_{iA} v_i , 
\end{equation}
where $v_i \in V$, 
and $y_{iA}$ and $\bar y_{iA}$ are stoichiometric coefficients of species $v_i$. 
We denote the stoichiometric matrix by 
$S$, whose components are given by $S_{iA} \coloneqq \bar y_{iA} - y_{iA}$. Using $S$, we can write down the rate equations that describe the time evolution of chemical concentrations:
\begin{equation}
  \frac{d}{dt} \bm x (t) =  S \bm r , 
  \label{eq:rate}
\end{equation}
where $x_i$ is the concentration of $v_i$ and $r_A$ is the reaction rate (i.e., fluxes) of reaction $e_A$. 
To solve the rate equations, we need to express the reaction fluxes, $r_A$, as functions of reactant concentrations, $\bm x$, and parameters, $k_A$, and this choice is called {\it kinetics}, i.e., $r_A = r_A (\bm x; k_A)$. 

We consider a situation where the system reaches an asymptotically-stable steady state in the long-time limit, and examine how the system responds to changes in parameters. 
The steady-state solution 
can be obtained by solving $S \bm r =\bm 0$. 
When the stoichiometric matrix 
has a nontrivial cokernel (left null space), the system has conserved charges, and we can specify their values to obtain the steady-state solution~\cite{FAMILI200316}: if we pick a basis $\{\bm d^{(\bar\alpha)} \}_{\bar\alpha}$ of $\coker S$, 
they can be fixed by 
\begin{equation}
\ell^{\bar\alpha} 
= 
\bm d^{(\bar\alpha)} \cdot \bm x,
\label{eq:cons}
\end{equation}
where $\ell^{\bar\alpha} \in \mathbb R$ specifies the value of the conserved charge $\bm d^{(\bar\alpha)} \cdot \bm x$. 

\emph{Robust perfect adaptation of reaction fluxes.}--We find that there is a simple topological criterion for identifying the parameters to which all the reaction fluxes exhibit RPA.
The topological condition is expressed by an index, which we will define below. 

Let us choose a subnetwork $\gamma = (V_\gamma, E_\gamma)$ of $\Gamma = (V, E)$, i.e., $V_\gamma \subset V$ and $E_\gamma \subset E$.
If $E_\gamma$ includes all the reactions whose reactants are in $V_\gamma$, 
we call the subnetwork to be {\it output-complete}. 
For an output-complete subnetwork $\gamma$, 
we introduce the {\it flux influence index}, 
which is an integer given by 
\begin{equation}
\lambda_{\rm f} (\gamma) \coloneqq  
- |V_\gamma| + |E_\gamma| 
+ |P^0_\gamma (\coker S)|. 
\label{eq:f-index}
\end{equation}
The first two terms 
are the number of chemical species and reaction in the subnetwork, 
respectively,
and the last term is the dimension 
of projected $\coker S$ to the species in $\gamma$:
\begin{equation}
   P^0_\gamma (\coker S) 
   \coloneqq 
   \left\{
   P^0_\gamma \bm d 
   \, \middle| \,
   \bm d \in \coker S 
   \right\}, 
\end{equation}
where $P^0_\gamma$ is the projection matrix to the species in $\gamma$.
Intuitively,
$|P^0_\gamma (\coker S)|$ counts the number of conserved charges 
that include the species inside $\gamma$. The flux influence index $\lambda_{\rm f}(\gamma)$ 
can be used to identify irrelevant subsystems for determining the steady-state reaction fluxes as follows\footnote{We provide the proofs of Theorems 1 \& 2 in the Supplemental Material (SM).}.
\begin{theorem}[RPA of reaction fluxes]  \label{th:rpa-flux}
Let $\gamma \subset \Gamma$ 
be an output-complete subnetwork of a chemical reaction network $\Gamma$. 
When $\gamma$ satisfies $\lambda_{\rm f}(\gamma)=0$, steady-state reaction fluxes 
of $\Gamma$ do not change under the variation of rate parameters of the reactions inside $\gamma$ 
and the values of conserved charges that have nonzero support in $\gamma$. 
Namely, the steady-state reaction flux 
$\bar r_A$ of {\it any} reaction in $\Gamma$, $e_A \in E$, satisfies 
$\frac{\p}{\p k_{B^\ast}} \bar r_A = 0$ 
where $e_{B^\ast} \in E_\gamma$
and 
$\frac{\p}{\p \ell_{\bar\alpha^\ast}} \bar r_A = 0$ 
where $\ell_{\bar\alpha^\ast}$ is the value of a conserved charge with nonzero support in $\gamma$. 
\end{theorem}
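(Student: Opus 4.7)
The plan is to linearize the steady-state conditions together with the conservation laws at a nondegenerate steady state, exhibit a block-upper-triangular decomposition of the induced sensitivity matrix implied by output-completeness of $\gamma$, and then read off RPA of all fluxes from invertibility of the $\gamma^c$-diagonal block. Concretely, I parametrize steady-state fluxes as $\bar{\bm r} = \sum_\rho \mu_\rho \bm c_\rho$ with $\{\bm c_\rho\}$ a basis of $\ker S$. The unknowns $(\bar{\bm x},\bm{\mu})$ satisfy $\bm r(\bar{\bm x};\bm k) - C\bm{\mu} = 0$ and $D^{T}\bar{\bm x} = \bm\ell$, where $C$ has columns $\bm c_\rho$ and $D$ has columns $\bm d^{(\bar\alpha)}$. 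Implicit differentiation yields the linear system
\[
A\begin{pmatrix}\delta\bar{\bm x}\\ \delta\bm{\mu}\end{pmatrix}=\bm b,\qquad A=\begin{pmatrix} R & -C\\ D^{T} & 0\end{pmatrix},
\]
with $R_{Ai}=\p r_A/\p x_i$ and $\bm b$ determined by the perturbed parameter. The goal is to prove $\delta\bar{\bm r}=C\,\delta\bm{\mu}=0$ whenever the perturbation is localized inside $\gamma$.

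The structural input comes from two adapted basis choices: a basis of $\coker S$ in which exactly $|P^{0}_{\gamma}(\coker S)|$ vectors have nonzero support in $V_\gamma$ and the remaining ones are supported only in $V_{\gamma^{c}}$, and a basis of $\ker S$ separating cycles contained in $E_\gamma$ from those with some support in $E_{\gamma^{c}}$. Output-completeness gives $R_{Ai}=0$ for $e_A\notin E_\gamma$ and $v_i\in V_\gamma$, and together with the adapted bases this brings $A$ into block-upper-triangular form relative to the bipartition whose $\gamma$-block has rows $E_\gamma$ and $\gamma$-supported conservation laws, and columns $V_\gamma$ and internal-cycle modes. A short count shows the $\gamma$-diagonal block has $|E_\gamma|+|P^{0}_{\gamma}(\coker S)|$ rows and $|V_\gamma|+k_\gamma$ columns, with $k_\gamma$ the number of internal cycles. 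Invertibility of $A$ (the nondegeneracy of the steady state) forces both diagonal blocks to be square, which gives $k_\gamma=\lambda_{\rm f}(\gamma)$. Hence $\lambda_{\rm f}(\gamma)=0$ automatically entails the absence of internal cycles in $\gamma$.

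A $\gamma$-localized perturbation, either in $k_{B^{\ast}}$ with $e_{B^{\ast}}\in E_\gamma$ or in $\ell^{\bar\alpha^{\ast}}$ with nonzero support in $V_\gamma$, produces a RHS $\bm b$ that vanishes on the $\gamma^{c}$-rows. Block-upper-triangularity then reduces the complement equations to $A_{\gamma^{c}\gamma^{c}}\,\delta(\bar{\bm x}_{\gamma^{c}},\bm{\mu}_{\text{out}})=0$, and invertibility of $A_{\gamma^{c}\gamma^{c}}$ gives $\delta\bar{\bm x}_{\gamma^{c}}=0$ and $\delta\bm{\mu}_{\text{out}}=0$. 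Since $k_\gamma=0$ there are no internal cycle modes at all, so $\delta\bar{\bm r}=C\,\delta\bm{\mu}=0$ identically. The main obstacle is the combinatorial argument that identifies $\lambda_{\rm f}(\gamma)=0$ with squareness of the local block and with the absence of internal cycles; the crux is verifying that output-completeness combined with the adapted $\coker S$ basis produces exactly the zero blocks required for upper-triangularity, while the adapted $\ker S$ basis is what lets one read off the correct column counts. Once the adapted bases are in place, handling $\ell^{\bar\alpha^{\ast}}$-perturbations on the same footing as $k_{B^{\ast}}$-perturbations is automatic, since in both cases $\bm b$ is supported only in the $\gamma$-rows.
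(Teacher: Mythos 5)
Your proposal follows essentially the same route as the paper's proof: the same bordered sensitivity matrix $A=\begin{pmatrix} R & -C\\ D^{T} & 0\end{pmatrix}$, the same adapted basis of $\coker S$ splitting off the charges supported outside $\gamma$, the same use of output-completeness to kill the lower-left block, and the same counting that makes the $\gamma$-block square exactly when $\lambda_{\rm f}(\gamma)=0$, after which all cycle coefficients $\mu_\alpha$ sit in the complementary block and are unaffected by $\gamma$-localized right-hand sides. The one overstated step is the claim that invertibility of a block-upper-triangular matrix forces the diagonal blocks to be square --- it only forces the upper-left block to have at least as many rows as columns, i.e. $k_\gamma \le \lambda_{\rm f}(\gamma)$ rather than $k_\gamma = \lambda_{\rm f}(\gamma)$ --- but this weaker inequality already gives $k_\gamma=0$ and squareness once $\lambda_{\rm f}(\gamma)=0$, so the argument goes through as intended.
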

This means that reaction fluxes 
of all the reactions in $\Gamma$ 
exhibit RPA 
with respect to the change of reaction parameters or values of conserved charges 
inside $\gamma$ if $\lambda_{\rm f}(\gamma)=0$. 
Notably, the adaptation is robust, meaning that it does not require any fine-tuning of parameters.
This robustness is a consequence of the fact that the condition of vanishing index is  determined only by the topology of the network and is insensitive to the details of the reactions. 
We refer to a subnetwork $\gamma$ with $\lambda_{\rm f}(\gamma)=0$ as a {\it strong buffering structure}. 
A reaction network can contain multiple strong buffering structures. 
As we show in the SM, strong buffering structures are closed under union and intersection: if $\gamma_1$ and $\gamma_2$ are strong buffering structures, so are $\gamma_1 \cup \gamma_2$ and $\gamma_1 \cap \gamma_2$. 
We say that a strong buffering structure is {\it maximal} if it contains all the strong buffering structures in the network. 

\emph{Example 1.}--As a simple example, let us consider the following reaction network $\Gamma$ with four species and six reactions,
\begin{align}
    e_1 &: \text{(input)} \to v_1, \notag \\
    e_2 &: v_1 \to v_2, \notag \\
    e_3 &: v_2 \to v_3 ,\notag \\
    e_4 &: v_3 \to v_4 ,  \\
    e_5 &: v_4 \to v_1 , \notag \\
    e_6 &: v_3 \to \text{(output)} . \notag
\end{align}
See the left part of Fig.~\ref{fig:ex1-red} for the visual representation of the network. 
The network contains seven strong buffering structures (see the SM for details). 
Among them, $\gamma_7^\ast = (\{v_1,v_2,v_4\},\{e_2,e_3,e_5 \})$ has a zero flux influence index ($\lambda_{\rm f}(\gamma) = -3 + 3 + 0 =0$) and is maximal.
Theorem 1 predicts that
reaction fluxes exhibit RPA with respect to parameters $k_2, k_3, k_5$. 
Indeed, if we for example employ the mass-action kinetics
and solve for the steady-state reaction fluxes, 
they only depend on $k_1, k_4,$ and $k_6$:
\begin{equation}
\bar{ \bm r} = 
k_1 \bm c^{(1)} + \frac{k_4 k_1}{ k_6 } \bm c^{(2)},
\label{eq:ex1-rate-ss}
\end{equation}
where 
$\bm c^{(1)} \coloneqq \begin{bmatrix}
1 & 1 & 0 & 1 & 0 & 1     
\end{bmatrix}^\top$ 
and 
$\bm c^{(2)} \coloneqq \begin{bmatrix}
1 & 1 & 1 & 0 & 1 & 0
\end{bmatrix}^\top$ 
are basis vectors of $\ker S$ (the component are arranged in the order of $\{e_2,e_3,e_5,e_1,e_4,e_6 \}$ for later convenience).

\emph{Minimal form.}--Combining the notion of strong buffering structures 
with the reduction method introduced in Ref.~\cite{PhysRevResearch.3.043123}, 
we can define a simpler yet faithful representation of complex reaction networks, in the sense that the steady-state fluxes of the original system can be fully recovered from those of the simplified system. 

Let us describe the construction of a reduced reaction system. 
For a chosen output-complete subnetwork $\gamma$, 
we separate the chemical concentrations and reaction fluxes into those inside/outside $\gamma$ as 
\begin{equation}
    \bm x = 
    \begin{pmatrix}
      \bm x_1 \\
      \bm x_2
    \end{pmatrix}, 
    \quad 
    \bm r = 
    \begin{pmatrix}
      \bm r_1 \\
      \bm r_2
    \end{pmatrix}, 
\end{equation}
where $1$ and $2$ correspond 
to inside and outside degrees of freedom, respectively. 
Accordingly to this separation of species/reactions, the stoichiometric matrix $S$ is partitioned into four blocks as 
\begin{equation}
  S = 
\begin{pmatrix}
  S_{11} & S_{12} \\
  S_{21} & S_{22} 
\end{pmatrix}. 
\end{equation}
In the reduced system, 
the species and reactions inside $\gamma$ 
are eliminated. 
The rate equation of the reduced reaction system is written as 
\begin{equation}
  \frac{d}{dt} {\bm x}_2 
   = S' \bm r_2 (\bm x_2), 
\end{equation}  
where $S'$ is the so-called generalized Schur complement of $S$ with respect to $S_{11}$,
\begin{eqnarray}
S' \coloneqq S_{22}  - S_{21} S_{11}^+ S_{12}. 
\label{eq:def-sp} 
\end{eqnarray}
Here, $S^+_{11}$ is the Moore-Penrose inverse of $S_{11}$. 
We will denote the reduced network obtained by removing $\gamma$ from $\Gamma$ as $\Gamma' \coloneqq \Gamma / \gamma$. 
The structure of the reduced network is characterized by $S'$,
and the second term of Eq.~\eqref{eq:def-sp} is responsible for the reconnections of reactions associated with the removal of subnetwork $\gamma$.

For a given reaction network $\Gamma$, 
its {\it minimal form} $\Gamma_{\rm min}$ 
is a reaction system obtained by reducing 
the maximal strong buffering structure in $\Gamma$
in the sense described above. 
We can show the following:\footnote{
We make a technical assumption on the nature of conserved charges in the system. See the SM for details. 
}
\begin{theorem}[Reconstruction of steady-state reaction fluxes from the minimal form]  \label{th:recon-min}
Let $\Gamma$ be a reaction system and $\Gamma_{\rm min}$ be its minimal form. 
Then, the steady-state reaction fluxes of $\Gamma$ can be reconstructed from those of $\Gamma_{\rm min}$. 
\end{theorem}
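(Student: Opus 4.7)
The plan is to construct the reconstruction explicitly via a Schur-complement inversion of the stationarity condition $S\bm r = \bm 0$. Splitting $\bm r = (\bm r_1,\bm r_2)^\top$ according to the partition induced by $\gamma$, the steady-state condition for $\Gamma$ decomposes into
\begin{equation}
S_{11} \bm r_1 + S_{12} \bm r_2 = \bm 0, \qquad S_{21} \bm r_1 + S_{22} \bm r_2 = \bm 0.
\end{equation}
Given the inclusion $\mathrm{im}(S_{12}) \subseteq \mathrm{im}(S_{11})$ that I will derive from $\lambda_{\rm f}(\gamma)=0$, the first equation admits the general solution $\bm r_1 = -S_{11}^+ S_{12}\bm r_2 + \bm q$ with $\bm q \in \ker S_{11}$, and substituting into the second produces $S'\bm r_2 + S_{21}\bm q = \bm 0$ via \eqref{eq:def-sp}. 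The strategy is then to establish (i) the inclusion of images above, which guarantees solvability; and (ii) that the ambiguity $\bm q$ vanishes, which simultaneously gives $S'\bm r_2 = \bm 0$ (so $\bm r_2$ is a steady-state flux of $\Gamma_{\rm min}$) and the explicit reconstruction $\bm r_1 = -S_{11}^+ S_{12}\bm r_2$.

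For (i), the identity $|E_\gamma| - |V_\gamma| + |P^0_\gamma(\coker S)| = 0$ is a codimension-matching statement: the number of independent conservation laws with nonzero support on $V_\gamma$ exactly compensates the difference between the numbers of species and reactions inside $\gamma$. Combining this dimension count with the output-complete condition on $\gamma$ (which controls how the left null vectors of $S$ distribute across the two row blocks) yields $\mathrm{im}(S_{12}) \subseteq \mathrm{im}(S_{11})$ by a rank comparison on the augmented matrix $(S_{11}\,|\,S_{12})$, after which $S_{11}^+ S_{11}$ acts as the orthogonal projection onto $(\ker S_{11})^\perp$ and supplies the canonical particular solution $-S_{11}^+ S_{12}\bm r_2$.

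For (ii), I would argue via Theorem~\ref{th:rpa-flux} that any nontrivial $\bm q$ would generate an additional steady-state flux mode supported entirely on $\gamma$, whose existence is inconsistent with the maximality of $\gamma$ as a strong buffering structure; the technical hypothesis on conserved charges referenced in the footnote simultaneously ensures that $\coker S$ projects faithfully onto $\coker S'$ under the reduction, so no such mode can be absorbed into emergent conservation laws of $\Gamma_{\rm min}$. The main obstacle I anticipate is exactly this uniqueness step: unlike (i), which is essentially a rank computation, eliminating the $\ker S_{11}$ ambiguity depends on a careful correspondence between $\coker S$ and $\coker S'$ under the reduction together with the maximal nature of the buffering structure, which is where the technical hypothesis enters most sharply.
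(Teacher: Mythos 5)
Your overall architecture is the same as the paper's: block-decompose $S\bm r=\bm 0$, solve the first block for $\bm r_1=-S_{11}^+S_{12}\bm r_2$, and recognize that the whole proof reduces to (i) solvability, i.e.\ $S_{12}\bm r_2\in{\rm im}\,S_{11}$, and (ii) killing the $\ker S_{11}$ ambiguity $\bm q$. However, both of your justifications for these steps have genuine gaps. For (i), the inclusion does \emph{not} follow from the index condition $\lambda_{\rm f}(\gamma)=0$ together with output-completeness by a rank comparison: output-completeness is a condition on which reactions (and hence which kinetic dependences) belong to $E_\gamma$, and it says nothing about how left null vectors of $S$ distribute over the row blocks. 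The ingredient that actually delivers the inclusion is the technical Assumption on conserved charges: every $\bm d_1\in\coker S_{11}$ extends by zero to an element of $\coker S$, which forces $\bm d_1^\top S_{12}=\bm 0$, hence $S_{12}\bm c_2\in(\coker S_{11})^\perp={\rm im}\,S_{11}$. You invoke that hypothesis only in step (ii), but it is indispensable already in step (i).

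For (ii), your argument that a nontrivial $\bm q\in\ker S_{11}$ would ``generate an additional steady-state flux mode supported on $\gamma$'' contradicting maximality does not work: an element of $\ker S_{11}$ is a cycle of the \emph{subnetwork} only, and need not extend to an element of $\ker S$ (it may be an emergent cycle), so it produces no steady-state flux of $\Gamma$ and no contradiction with Theorem~1. The paper's route is more direct and makes the ambiguity vanish identically: using the decomposition
\begin{equation}
\lambda_{\rm f}(\gamma)=|\ker S_{11}|+d_l(\gamma)-\widetilde d(\gamma),
\end{equation}
the Assumption gives $\widetilde d(\gamma)=0$, and nonnegativity of the remaining two terms together with $\lambda_{\rm f}(\gamma)=0$ forces $\ker S_{11}=\bm 0$ and $d_l(\gamma)=0$. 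Thus there is no $\bm q$ to eliminate, $\ker S\simeq\ker S'$ via the explicit map you wrote down, and maximality plays no role in this part (the statement holds for any strong buffering structure satisfying the Assumption). Finally, you omit the last check the paper performs: that the equations fixing the values of the conserved charges, split according to $d_l(\gamma)=0$ into charges supported inside $\gamma$ and charges descending to $\coker S'$, impose no additional constraints on $\bm x_2$ that could invalidate the reconstructed solution.
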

We provide a proof in the SM. 

Let us illustrate the reconstruction procedure. 
Suppose that the steady-state reaction flux in $\Gamma_{\rm min}$ is given by 
\begin{equation}
\bar{\bm r}_2 = \sum_\alpha \mu^{(\alpha)} \bm c_2^{(\alpha)} ,
\label{eq:r2-ss}
\end{equation}
where $\{\bm c^{(\alpha)}_2 \}_\alpha$ is 
a basis of $\ker S'$,
and the coefficients 
$\mu^{(\alpha)} = \mu^{(\alpha)}(\bm k_2)$ are 
functions of the reaction parameters outside $\gamma$. 
When $\gamma$ is a strong buffering structure, 
there is an isomorphism $f$
from $\ker S'$ to $\ker S$,
\begin{equation}
\ker  S'
\ni
\bm c_2
\mapsto 
f(\bm c_2) =
\begin{bmatrix}
- S_{11}^+ S_{12} \bm c_2
\\
\bm c_2 
\end{bmatrix} 
\in \ker S.
\label{eq:map-sp-s}
\end{equation}
Using this map, the steady-state reaction flux of the original system is reconstructed by 
\begin{equation}
\bar{\bm r} 
= \sum_\alpha \mu^{(\alpha)}f(\bm c^{(\alpha)}_2) ,
\end{equation}
where the coefficients $\mu^{(\alpha)}$ are 
the same as Eq.~\eqref{eq:r2-ss}.

In Example 1, $\gamma_7^\ast$ 
is maximal, and 
the minimal form is obtained as $\Gamma_{\rm min} = \Gamma / \gamma_7^\ast$. 
Under the reduction $\Gamma \to \Gamma_{\rm min}$, 
the stoichiometric matrix changes as 
\begin{equation}
\resizebox{9cm}{!}{%
\begin{tikzpicture}
\node at (0, 0) {
$
S =
\begin{blockarray}{cccccccc}
&& \\
\begin{block}{c[ccccccc]}
{\color{mydarkred}v_1} \quad\, 
& -1 & 0 & 1 & 1 & 0 & 0 \\
 {\color{mydarkred}v_2} \quad\, 
& 1 & -1 & 0 & 0 & 0 &0 \\
{\color{mydarkred}v_4} \quad\, 
& 0 & 0 & -1 & 0 & 1 &0 \\
v_3 \quad\, 
& 0 & 1 & 0  & 0 & -1 & -1 \\
\end{block} 
& {\color{mydarkred}e_2} 
& {\color{mydarkred}e_3}
& {\color{mydarkred}e_5}
& e_1 
& e_4 
& e_6
\end{blockarray}  
$
}; 
\draw[mydarkred, dashed,line width=1] 
 (-0.9,-0.21) rectangle (0.8,1);
\node at (-0.1, 1.25) {\color{mydarkred}$S_{11}$} ;

\draw [mybrightblue,ultra thick,-latex]   (3,0) -- (4, 0); 

\node at (5.8, 0) {
$
S' =
\begin{blockarray}{ccccc}
&& \\
\begin{block}{c[cccc]}
v_3 \,\, & 1 & 0 & -1 \\
\end{block} 
& e_1 & e_4 & e_6
\end{blockarray}
          $ 
      }; 

\end{tikzpicture}
}
\label{eq:ex-v4-e6-s-sp}
\end{equation}
where we have brought the components in 
$\gamma_7^\ast$ to the upper-left part. 
The reduction can be visually represented 
as Fig.~\ref{fig:ex1-red}. 

\begin{figure}[tb]
\centering
\resizebox{9cm}{!}{
\begin{tikzpicture} 
    \node[species] (v1) at (1.25,0) {$v_1$}; 
    \node[species] (v2) at (2.5,0) {$v_2$};
    \node[species] (v3) at (3.75,0) {$v_3$};
    \node[species] (v4) at (2.5,1.5) {$v_4$}; 
    \node (d1) at (0,0) {}; 
    \node (d2) at (5,0) {}; 
 
    \draw [-latex,draw,line width=0.5mm] (d1) edge node[below]{$e_1$} (v1);
    \draw [-latex,draw,line width=0.5mm] (v1) edge node[below]{$e_2$} (v2);
    \draw [-latex,draw,line width=0.5mm] (v2) edge node[below]{$e_3$} (v3);
    \draw [-latex,draw,line width=0.5mm] (v3) edge node[above right]{$e_4$} (v4);
    \draw [-latex,draw,line width=0.5mm] (v4) edge node[above left]{$e_5$} (v1);
    \draw [-latex,draw,line width=0.5mm] (v3) edge node[below]{$e_6$} (d2);
       
   \draw [mybrightblue,ultra thick,-latex]
   (5.5,0) -- (6.5,0); 
   
    \node[species] (red_v3) at (8.25,0) {$v_3$};
    \node (red_d1) at (7,0) {}; 
    \node (red_d2) at (9.5,0) {}; 

    \draw [-latex,line width=0.5mm] (red_d1) edge node[below]{$e_1$} (red_v3);

    \draw [-latex,line width=0.5mm,loop above,out=60,in=120,looseness=10] 
(red_v3) edge node[above]{$e_4$} (red_v3);

    \draw [-latex,line width=0.5mm] (red_v3) edge node[below]{$e_6$} (red_d2);

\end{tikzpicture}}
\caption{
Simple example of reduction to the minimal form (Example 1). The left network is the original one, and the right one is the corresponding minimal form. 
}
\label{fig:ex1-red}
\end{figure}
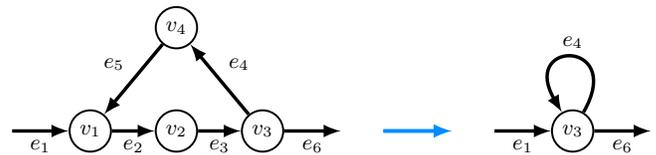

We can easily compute the steady-state reaction fluxes of the original system using its minimal form. 
A set of basis vectors of $\ker S'$ can be picked as
$
\bm c^{(1)}_2 
= 
\begin{bmatrix}
    1 &
    0 &
    1
\end{bmatrix}^\top ,
$
and 
$
\bm c^{(2)}_2 
= 
\begin{bmatrix}
    0 &
    1 &
    0
\end{bmatrix}^\top,
$
where the elements are in the order of $\{e_1,e_4,e_6\}$. 
The rate equation of the reduced system involves only one species and is given by 
\begin{equation}
\frac{d}{dt} x_3 = k_1 - k_6 x_3,
\end{equation}
where we have chosen the mass-action kinetics. 
At the steady state, $\bar x_3 = k_1 / k_6$. 
Since $r_4 = k_4 x_3$, 
its steady-state value is determined as $\bar r_4 = k_4 k_1 / k_6$. 
Hence, the steady-state reaction fluxes in the reduced system are given by
\begin{equation}
\bar{ \bm r}_2 = 
k_1 \bm c^{(1)}_2 + \frac{k_4 k_1}{ k_6 } \bm c^{(2)}_2 . 
\end{equation}
Using the map~\eqref{eq:map-sp-s}, 
the steady-state reaction fluxes of the original system is obtained. 
Indeed, $\bm c_1^{(1)}$ and $\bm c^{(2)}$ in Eq.~\eqref{eq:ex1-rate-ss} are the images of $\bm c_2^{(1)}$
and $\bm c_2^{(2)}$ via the map \eqref{eq:map-sp-s}, 
i.~e. 
$\bm c^{(1)} = f(\bm c^{(1)}_2)$
and 
$\bm c^{(2)} = f(\bm c^{(2)}_2)$

\begin{figure*}
	\centering
		\includegraphics[width= 0.9\textwidth]{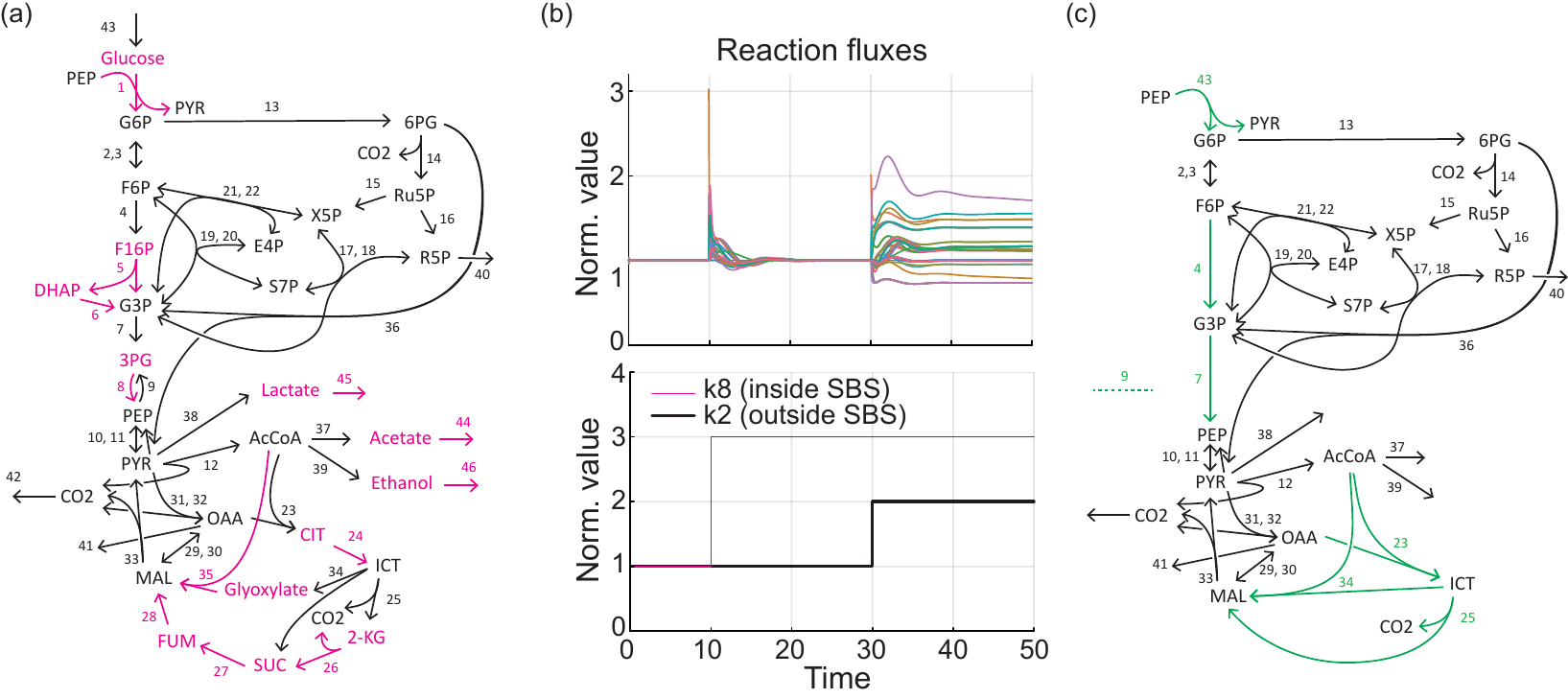}
		\caption{(a) Central metabolic pathways of \textit{E. coli.}. The magenta-colored subnetwork is the maximal buffering structure. 
  (b) Time-series of reaction fluxes of all 46 reactions. After perturbing parameter $k_8$, inside the maximal strong buffering structure, at time $t = 10$, all reaction fluxes exhibit RPA. After perturbing parameter $k_2$, outside the maximal strong buffering structure, at time $t = 30$, the reaction fluxes are changed.
  (c) The minimal form of the network in (a). The green-colored reactions are the reconnected ones after the reduction.}
\label{fig:e.coli} 
\end{figure*}

\emph{Metabolic pathways of E.~coli.}--As a realistic example, here we consider the metabolic pathways \cite{doi:10.1002/mma.3436,ishii2007multiple} of {\it E.~coli}. 
The whole network $\Gamma$ is shown in Fig.~\ref{fig:e.coli}(a), 
which consists of the glycolysis, the pentose phosphate pathway, and the tricarboxylic acid cycle.
This network contains 12 generators of strong buffering structures (note that the unions of them are also strong buffering structures).
The maximal one, $\bar\gamma_{\rm s}$, is the union of all the 12 subnetworks (Fig. 2(a); magenta-colors).
All the reaction fluxes at the steady-state 
are insensitive to 12 parameters in $\bar\gamma_{\rm s}$. For instance, when $k_8$ in $\bar\gamma_{\rm s}$ is perturbed, all the reaction fluxes are adapted to the original steady states (Fig. 2(b)). On the other hand, such RPA does not occur when $k_2$ outside $\bar\gamma_{\rm s}$ is perturbed (Fig. 2(b)).
In Fig.~\ref{fig:e.coli}(c), we present the minimal form $\Gamma_{\rm min}\coloneqq \Gamma / \bar\gamma_{\rm s}$. 
While the original system contains 28 species and 46 reactions, 
the reduced system has 16 species and 34 reactions. 
The steady-state reaction fluxes of $\Gamma$ can be exactly reconstructed from the smaller system $\Gamma_{\rm min}$. The reduced system is simpler to analyze and is helpful for us to identify relevant parameters for determining reaction fluxes, which can be useful for flux balance analysis~\cite{https://doi.org/10.1002/wsbm.60,orth2010flux,10.1093/bib/bbp011,KAUFFMAN2003491}.

\emph{Summary and discussions.}--In this Letter, we have identified the structural condition under which all the reaction fluxes show RPA with respect to the parameters inside a subnetwork $\gamma$: 
this is achieved when $\gamma$ is output-complete and 
has vanishing flux influence index $\lambda_{\rm f}(\gamma) = 0$ introduced in Eq.~\eqref{eq:f-index}. 
Identifying strong buffering structures is particularly useful for the analysis of large reaction networks, as the parameters inside these subnetworks are irrelevant in determining the steady-state reaction fluxes and can be disregarded from the start. 
Theorem~1 provides us with a constraint from the network topology on the response of reaction systems against parameter perturbations and would be helpful in understanding the results of, for example, metabolic control analysis~\cite{kacser1973control,fell1992metabolic,szathmary1993deleterious,maclean2010predicting,bagheri2004evolution}. 
We gave a method for obtaining the corresponding reduced system (minimal form), which is the tightest expression of the original reaction network without losing any information about the steady-state reaction fluxes. 
Since the condition of vanishing index is a topological one and is insensitive to the details of reactions such as reaction kinetics and parameter values, 
the results hold for any kinetics/parameters
and thus are broadly applicable. 

Finally, let us comment on the relation of strong buffering structures to buffering structures defined by the vanishing of the influence index, $\lambda(\gamma)$~\cite{PhysRevLett.117.048101,PhysRevE.96.022322,PhysRevResearch.3.043123}. 
When a subnetwork has a vanishing influence index $\lambda(\gamma)=0$, the concentrations and reaction fluxes outside $\gamma$ exhibit RPA with respect to the parameters inside $\gamma$.
Strong buffering structures are a restricted type of buffering structures, and {\it all} the reaction fluxes show RPA for the parameters inside $\gamma$ if $\lambda_{\rm f}(\gamma)=0$. 
As we show in the SM, a strong one is always a usual buffering structure: $\lambda_{\rm f}(\gamma)=0$ implies $\lambda(\gamma)=0$.

\begin{acknowledgments}
YH is supported by an appointment of the JRG Program at the APCTP, which is funded through the Science and Technology Promotion Fund and Lottery Fund of the Korean Government, and is also supported by the Korean Local Governments of Gyeongsangbuk-do Province and Pohang City. YH is also supported by the National Research Foundation (NRF) of Korea (Grant No. 2020R1F1A1076267) funded by the Korean Government (MSIT).
HH is supported by the National Research Foundation of Korea (NRF) NRF-2019-
Fostering Core Leaders of the Future Basic Science Program/Global Ph.D. Fellowship Program 2019H1A2A1075303.
JKK is supported by the Institute for Basic Science IBS-R029-C3.
\end{acknowledgments}

\onecolumngrid
\newpage

{\centerline{\large\textbf{Supplemental Material for}}}
\vspace{1mm}
{\centerline{\large\textbf{``Robust Perfect Adaptation of Reaction Fluxes Ensured by Network Topology''}}}

\section{ Notations }

In the following, the sets of chemical species and reactions 
are denoted by $V$ and $E$, respectively. 
For a given species $v_i \in V$
and reaction $e_A \in E$, 
the corresponding concentration 
and reaction rate 
will be denoted by $x_i$ and $r_A$, respectively. 
The cardinality of a set $V$ is indicated by $|V|$. 
The dimension of a vector space $W$ is denoted by $|W|$. 
For a given set of vectors, $\bm v_1, \bm v_2, \cdots$, 
 ${\rm span\,}\{ \bm v_1, \bm v_2, \cdots \}$ 
indicates the vector space spanned by them.

\section{ Relation of the flux influence index and influence index }

Here, we discuss the relationship between 
the flux influence index $\lambda_{\rm f}(\gamma)$ 
introduced in this work and the influence index $\lambda(\gamma)$~\cite{PhysRevLett.117.048101SM,PhysRevE.96.022322SM,PhysRevResearch.3.043123SM}. 
These two indices are related by 
\begin{equation}
\lambda_{\rm f} (\gamma) 
= 
\lambda (\gamma) 
+ |(\ker S)_{{\rm supp}\, \gamma}|,
\label{eq:rel-f-index-i-index}
\end{equation}
where the last term is the dimension of the following vector space,
\begin{equation}
 (\ker S)_{{\rm supp}\,\gamma}
 \coloneqq 
 \left\{
  \bm c 
  \, \middle| \,
  \bm c \in \ker S, 
  P^1_\gamma \bm c = \bm c 
  \right\}. 
  \label{eq:ker-s-supp-gamma}
\end{equation}  
Here, $P^1_\gamma$ is the projection matrix to $\gamma$ in the space of chemical reactions. 
Intuitively speaking, $(\ker S)_{{\rm supp}\,\gamma}$ is the space of cycles that are supported inside subnetwork $\gamma$. 
Noting that $\lambda (\gamma)$ is nonnegative
for reaction systems with asymptotically stable steady states, $\lambda_{\rm f}(\gamma)=0$ implies $\lambda(\gamma)=0$. 
Thus, a strong buffering structure is always a buffering structure.

\section{ Closure property of strong buffering structures }

We first prove the submodularity of the flux influence index. It consequently implies that strong buffering structures are closed under the union and intersection. 

\begin{proposition}[Submodularity of the flux influence index]\label{prop:submod}
The following inequality holds for any two subnetworks $\gamma_1$ and $\gamma_2$,
\begin{equation}
\lambda_{\rm f} (\gamma_1) + \lambda_{\rm f} (\gamma_2) 
\geq \lambda_{\rm f} (\gamma_1 \cap \gamma_2)  + \lambda_{\rm f} (\gamma_1 \cup \gamma_2). 
\end{equation}
\end{proposition}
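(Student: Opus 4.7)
I would first split the index into its two natural pieces, $\lambda_{\rm f}(\gamma) = \bigl(-|V_\gamma| + |E_\gamma|\bigr) + |P^0_\gamma(\coker S)|$, and handle each separately. The counts $|V_\gamma|$ and $|E_\gamma|$ are modular set functions, so
\begin{equation}
|V_{\gamma_1}| + |V_{\gamma_2}| = |V_{\gamma_1 \cap \gamma_2}| + |V_{\gamma_1 \cup \gamma_2}|,
\end{equation}
and likewise for edges. Hence these combinatorial contributions cancel out of both sides of the claimed inequality, reducing the proposition to the submodularity of $\gamma \mapsto |P^0_\gamma(\coker S)|$, i.e.
\begin{equation}
|P^0_{\gamma_1}(\coker S)| + |P^0_{\gamma_2}(\coker S)| \geq |P^0_{\gamma_1 \cap \gamma_2}(\coker S)| + |P^0_{\gamma_1 \cup \gamma_2}(\coker S)|.
\end{equation}

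\textbf{Reduction to kernels.} Set $W \coloneqq \coker S$, and for a vertex subset $U \subseteq V$ let $P_U$ denote the coordinate projection onto the species in $U$. I introduce
\begin{equation}
K_U \coloneqq \{\, w \in W : w|_U = 0 \,\},
\end{equation}
the kernel of $P_U$ restricted to $W$. The rank-nullity theorem gives $|P_U(W)| = |W| - |K_U|$. Substituting into the reduced inequality above and cancelling the common $2|W|$, the goal becomes
\begin{equation}
|K_{U_1 \cap U_2}| + |K_{U_1 \cup U_2}| \geq |K_{U_1}| + |K_{U_2}|,
\end{equation}
where $U_j \coloneqq V_{\gamma_j}$.

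\textbf{Final step.} Two elementary observations finish the job. First, $K_{U_1 \cup U_2} = K_{U_1} \cap K_{U_2}$, because a vector vanishes on $U_1 \cup U_2$ iff it vanishes on each $U_j$. Second, $K_{U_1} + K_{U_2} \subseteq K_{U_1 \cap U_2}$, because $U_1 \cap U_2 \subseteq U_j$ forces any element of $K_{U_j}$ to vanish on $U_1 \cap U_2$, and this property is preserved under sums. Applying the standard subspace dimension identity $|A + B| + |A \cap B| = |A| + |B|$ with $A = K_{U_1}$, $B = K_{U_2}$ then yields
\begin{equation}
|K_{U_1 \cap U_2}| + |K_{U_1 \cup U_2}| \geq |K_{U_1} + K_{U_2}| + |K_{U_1} \cap K_{U_2}| = |K_{U_1}| + |K_{U_2}|,
\end{equation}
which is exactly the desired inequality. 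I do not expect any genuine obstacle: the only subtlety worth flagging is that the inclusion $K_{U_1} + K_{U_2} \subseteq K_{U_1 \cap U_2}$ is typically strict, which is precisely the reason submodularity is an inequality rather than an identity.
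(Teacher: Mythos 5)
Your proof is correct, and its first half is exactly the paper's argument: the vertex and edge counts are modular, so they cancel from both sides and the proposition reduces to the submodularity of $\gamma \mapsto |P^0_\gamma(\coker S)|$. The difference is in how that remaining fact is handled. The paper simply cites it from Ref.~\cite{PhysRevResearch.3.043123SM}, whereas you prove it from scratch: passing to the kernels $K_U$ via rank–nullity, observing $K_{U_1\cup U_2}=K_{U_1}\cap K_{U_2}$ and $K_{U_1}+K_{U_2}\subseteq K_{U_1\cap U_2}$, and invoking the dimension identity for sums and intersections of subspaces. All of these steps check out (and your remark that the inclusion $K_{U_1}+K_{U_2}\subseteq K_{U_1\cap U_2}$ is the sole source of the inequality is exactly right). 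What your version buys is self-containedness and transparency — the submodularity of the projected dimension is seen to be a purely linear-algebraic fact about coordinate projections of a fixed subspace, with no input from the reaction-network structure — at the cost of a few extra lines that the paper outsources to its reference.
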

\begin{proof}
By definition, the flux influence indices are given by 
\begin{align}
\lambda_{\rm f} (\gamma_1 \cap \gamma_2)  & =
- |V_{\gamma_1 \cap \gamma_2}| + |E_{\gamma_1 \cap \gamma_2}| 
+ |P^0_{\gamma_1 \cap \gamma_2} (\coker S)|, \\ 
\lambda_{\rm f} (\gamma_1 \cup \gamma_2)  & =
- |V_{\gamma_1 \cup \gamma_2}| + |E_{\gamma_1 \cup \gamma_2}| 
+ |P^0_{\gamma_1 \cup \gamma_2} (\coker S)|.
\end{align}
By summing these two, we get 
\begin{equation}
\begin{split}
    \lambda_{\rm f} (\gamma_1 \cap \gamma_2)  + \lambda_{\rm f} (\gamma_1 \cup \gamma_2)  &=
- (|V_{\gamma_1 \cap \gamma_2}| +|V_{\gamma_1 \cup \gamma_2}|) + |E_{\gamma_1 \cap \gamma_2}| + |E_{\gamma_1 \cup \gamma_2}| 
+ |P^0_{\gamma_1 \cap \gamma_2} (\coker S)|+ |P^0_{\gamma_1 \cup \gamma_2} (\coker S)|  \\ 
& =- (|V_{\gamma_1}| +|V_{\gamma_2}|) + |E_{\gamma_1}| + |E_{\gamma_2}| 
+ |P^0_{\gamma_1 \cap \gamma_2} (\coker S)|+ |P^0_{\gamma_1 \cup \gamma_2} (\coker S)| \\ 
& = \lambda_{\rm f} (\gamma_1) + \lambda_{\rm f} (\gamma_2) - |P^0_{\gamma_1} (\coker S)| - |P^0_{\gamma_2} (\coker S)| + |P^0_{\gamma_1 \cap \gamma_2} (\coker S)| + |P^0_{\gamma_1 \cup \gamma_2} (\coker S)| \\ 
&\leq \lambda_{\rm f} (\gamma_1) + \lambda_{\rm f} (\gamma_2).
\end{split}
\end{equation}
The last inequality holds because $|P^0_{\gamma} (\coker S)|$ is a submodular function, i.e., $|P^0_{\gamma_1} (\coker S)| + |P^0_{\gamma_2} (\coker S)| \geq |P^0_{\gamma_1 \cap \gamma_2} (\coker S)| + |P^0_{\gamma_1 \cup \gamma_2} (\coker S)|$, as shown in a previous work~\cite{PhysRevResearch.3.043123SM}. Consequently, the flux influence index is also a submodular function. 
\end{proof}

\begin{corollary}[Closure property of strong buffering structures]
If $\gamma$ and $\gamma_2$ are strong buffering structures, then both $\gamma_1 \cap \gamma_2$ and $\gamma_1 \cup \gamma_2$ are strong buffering structures. 
\end{corollary}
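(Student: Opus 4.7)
The plan is to deduce the corollary directly from Proposition~\ref{prop:submod} combined with non-negativity of the flux influence index. If $\gamma_1$ and $\gamma_2$ are strong buffering structures, so that $\lambda_{\rm f}(\gamma_1) = \lambda_{\rm f}(\gamma_2) = 0$, the submodular inequality immediately collapses to
\[
\lambda_{\rm f}(\gamma_1 \cap \gamma_2) + \lambda_{\rm f}(\gamma_1 \cup \gamma_2) \leq 0,
\]
so it suffices to show that each summand on the left is non-negative, which forces both to vanish and hence both subnetworks to be strong buffering structures.

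Non-negativity is supplied by the identity \eqref{eq:rel-f-index-i-index},
\[
\lambda_{\rm f}(\gamma) = \lambda(\gamma) + |(\ker S)_{{\rm supp}\,\gamma}| .
\]
The second summand is the dimension of a vector space and is therefore automatically $\geq 0$, while $\lambda(\gamma) \geq 0$ on output-complete subnetworks is the standard non-negativity of the ordinary influence index under the paper's blanket assumption of an asymptotically stable steady state, as noted directly after \eqref{eq:rel-f-index-i-index}. Combining with the bound above then forces $\lambda_{\rm f}(\gamma_1 \cap \gamma_2) = \lambda_{\rm f}(\gamma_1 \cup \gamma_2) = 0$.

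The step I expect to require the most care, and the main obstacle to a clean write-up, is the bookkeeping of output-completeness for $\gamma_1 \cap \gamma_2$ and $\gamma_1 \cup \gamma_2$, which is tacitly needed both for the non-negativity of $\lambda$ and for the result to qualify as a strong buffering structure. For the intersection the check is immediate: a reaction whose reactants all lie in $V_{\gamma_1} \cap V_{\gamma_2}$ belongs to $E_{\gamma_i}$ for each $i$ by output-completeness, and hence to $E_{\gamma_1} \cap E_{\gamma_2} = E_{\gamma_1 \cap \gamma_2}$. For the union the analogous statement is transparent in the monomolecular case relevant to the paper's examples, since then any reaction with its single reactant in $V_{\gamma_1} \cup V_{\gamma_2}$ has that reactant in some $V_{\gamma_i}$ and is placed in $E_{\gamma_i} \subseteq E_{\gamma_1 \cup \gamma_2}$; more generally one checks that the combinatorial identities $|V_{\gamma_1 \cap \gamma_2}| + |V_{\gamma_1 \cup \gamma_2}| = |V_{\gamma_1}| + |V_{\gamma_2}|$ and the analogous one for $E$ already used in Proposition~\ref{prop:submod} are consistent with this definition, so no further modification of the bound is required.
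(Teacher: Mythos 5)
Your main line of argument is exactly the paper's own proof: submodularity (Proposition~\ref{prop:submod}) plus non-negativity of $\lambda_{\rm f}$ forces both $\lambda_{\rm f}(\gamma_1\cap\gamma_2)$ and $\lambda_{\rm f}(\gamma_1\cup\gamma_2)$ to vanish, and you even supply the justification of non-negativity via \eqref{eq:rel-f-index-i-index} that the paper leaves implicit. The one step that does not hold up as written is the output-completeness of the union. You have read ``output-complete'' as ``$E_\gamma$ contains every reaction \emph{all} of whose reactants lie in $V_\gamma$,'' which makes the intersection easy but the union genuinely problematic for multimolecular reactions whose reactants are split between $V_{\gamma_1}$ and $V_{\gamma_2}$; and your closing appeal to the cardinality identities $|V_{\gamma_1\cap\gamma_2}|+|V_{\gamma_1\cup\gamma_2}|=|V_{\gamma_1}|+|V_{\gamma_2}|$ is not an argument for output-completeness at all --- those identities hold for any pair of subnetworks and say nothing about which reactions must be included. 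The definition actually in force (see the proof of Theorem~1, where output-completeness is invoked to conclude that $r_A$ for $e_A\notin E_\gamma$ depends on \emph{no} species in $V_\gamma$) is that $E_\gamma$ must contain every reaction having \emph{some} reactant in $V_\gamma$. Under that reading both closure checks are immediate and kinetics-independent: a reactant in $V_{\gamma_1}\cup V_{\gamma_2}$ lies in some $V_{\gamma_i}$, so the reaction lies in $E_{\gamma_i}\subseteq E_{\gamma_1}\cup E_{\gamma_2}$; a reactant in $V_{\gamma_1}\cap V_{\gamma_2}$ lies in both, so the reaction lies in $E_{\gamma_1}\cap E_{\gamma_2}$. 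With that correction your proof is complete and coincides with the paper's.
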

\begin{proof}
To prove that the intersection and union are strong buffering structures, we need to show that they are output-complete and their flux influence indices are zero. The output-completeness is clearly guaranteed after taking the intersection and union. Since the flux influence indices are always nonnegative, $\lambda_{\rm f} (\gamma_1) = \lambda_{\rm f} (\gamma_2) = 0$ implies that $\lambda_{\rm f} (\gamma_1 \cap \gamma_2) = \lambda_{\rm f} (\gamma_1 \cup \gamma_2) = 0 $ by the submodularity of the flux influence index (Proposition~\ref{prop:submod}), which means that the intersection and union are strong buffering structures.
\end{proof}

\section{ Proof of Theorem 1 }

We assume that the system reaches a steady state in the long time limit, and the chemical concentrations at the steady state are determined uniquely for a given set of rate parameters and the values of conserved charges, $\{k_A, \ell^{\bar\alpha}\}$.
At the steady state, 
the reaction rates and the concentrations satisfy\footnote{
Here we use different characters for 
indices of 
chemical species ($i, j, \ldots$), 
reactions ($A,B,\ldots$), 
the basis of the kernel of $S$ ($\alpha, \beta, \ldots$), and that of cokernel of $S$ ($\bar\alpha, \bar\beta, \ldots$).
We use the notation where $|i|$ indicates the number of values the index $i$ takes. 
Thus, $|i|$ indicates the number of chemical species 
and $|A|$ indicates the number of reactions. 
}
\begin{align}
\sum_A  S_{iA}   r_A ( \bm x (\bm k, \bm \ell), k_A ) &= 0, \label{eq:sr} \\
 \sum_i d^{(\bar \alpha)}_i x_i (\bm k, \bm \ell)
 &= \ell^{\bar\alpha},  \label{eq:l-dx} 
\end{align} 
where 
$\{\bm d^{(\bar\alpha)}\}_{\bar\alpha=1,\ldots ,|\bar\alpha|}$ is a basis of $\coker S$,
and 
the second equation specifies the values of conserved charges. 
We assume the existence of an asymptotically stable steady state, and we study how the state 
changes under the change of parameters. 
The concentrations and reaction rates 
are determined by solving Eqs.~\eqref{eq:sr} and \eqref{eq:l-dx}. 
Note that steady-state reaction rates $r_A (\bm x (\bm k, \bm \ell), k_A)$ have an explicit dependence on $k_A$, as well as an implicit dependence on $\bm k$ and $\bm \ell$ through steady-state concentrations, $x_i (\bm k, \bm \ell)$. 
Equation~\eqref{eq:sr} indicates that the steady-state reaction rates are in the kernel of 
matrix $S$. Hence, the rates can be written as 
\begin{equation}
  r_A ( \bm x (\bm k, \bm \ell), k_A ) 
  = \sum_\alpha \mu_\alpha (\bm k, \bm \ell) c^{(\alpha)}_A . 
  \label{eq:r-mu-c}
\end{equation}
where $\{\bm c^{(\alpha)}\}_{\alpha=1,\ldots ,|\alpha|}$ 
is a basis of $\ker S$. 
Taking the derivative of 
Eqs.~(\ref{eq:r-mu-c}) and (\ref{eq:l-dx}) 
with respect to $k_B$ and $\ell^{\bar\beta}$, 
we have
\begin{align}
 \sum_i \frac{\p r_A}{\p x_i} 
  \frac{\p x_i}{\p k_B} 
  + 
  \frac{\p r_A}{\p k_B}
  &=  \sum_\alpha \frac{\p \mu_\alpha }{\p k_B} c^{(\alpha)}_A  , 
  \label{eq:der-1}
  \\
  \sum_i \frac{ \p r_A }  {\p x_i } 
 \frac{\p x_i}{\p \ell^{\bar\beta }} 
 &=  
 \sum_\alpha  \frac{ \p \mu_\alpha }  {\p \ell^{\bar\beta }} c^{(\alpha)}_A , \\
\sum_i d_i^{(\bar\alpha)} \frac{\p x_i}{\p k_B} &= 0,  
 \\
\sum_i  d_i^{(\bar\alpha)} \frac{\p x_i}{\p \ell^{\bar\beta}}
 &=  \delta^{\bar\alpha \bar\beta} . 
\label{eq:der-4}
\end{align}
Here, $\frac{\p r_A}{\p k_B}$ indicates the derivative with respect to the {\it explicit} dependence, 
and 
$\frac{\p r_A}{\p x_i}$ is evaluated at the steady state. 
We introduce a square matrix by 
\begin{equation}
  {\bf A} 
  \coloneqq 
  \begin{pmatrix}
    \p_i r_A 
    & - c^{(\alpha)}_A \\
    d^{(\bar\alpha)}_i & \bm 0 
  \end{pmatrix},
\end{equation}
where 
$\p_i \coloneqq \p / \p x_i$. 
Using this, Eqs.~(\ref{eq:der-1}--\ref{eq:der-4}) are summarized in a compact way in matrix form, 
\begin{equation}
  {\bf A}  
  \begin{pmatrix}
 \p_B x_i \\
  \p_B \mu_\alpha     
  \end{pmatrix}
  = - 
  \begin{pmatrix}
  \p_B  r_A \\
    \bm 0
\end{pmatrix}
, 
\quad 
{\bf A} \, 
\begin{pmatrix}
\p_{\bar\beta} x_i \\
\p_{\bar\beta} \mu_\alpha     
\end{pmatrix}
= 
\begin{pmatrix}
 \bm 0 \\
 \delta^{\bar\alpha \bar\beta }
\end{pmatrix}, 
\label{eq:sensitivity-1}
\end{equation}
where 
$\p_B \coloneqq \partial /\partial k^B$ 
,
$\p_{\bar\beta} \coloneqq \partial /\partial \ell^{\bar\beta}$.
By multiplying ${\bf A}^{-1}$ on Eq.~\eqref{eq:sensitivity-1}, we have
\begin{equation}
 \p_B
 \begin{pmatrix}
  x_i \\
  \mu_\alpha     
  \end{pmatrix}
  = - 
  {\bf A}^{-1}
  \begin{pmatrix}
  \p_B  r_A \\
    \bm 0
\end{pmatrix}
, 
\quad 
\p_{\bar\beta}
\begin{pmatrix}
 x_i \\
 \mu_\alpha     
\end{pmatrix}
=  {\bf A}^{-1}
\begin{pmatrix}
 \bm 0 \\
 \delta^{\bar\alpha \bar\beta }
\end{pmatrix}.  
\label{eq:sensitivity-2}
\end{equation}
Note that $\p_B r_A$ is a diagonal matrix, i.e., $\p_B r_A \propto \delta_{BA}$. 
If we partition ${\bf A}^{-1}$ as 
\begin{equation}
  {\bf A}^{-1} = 
  \begin{pmatrix}
    ({\bf A}^{-1})_{iA} & ({\bf A}^{-1})_{i \bar\alpha  } \\
    ({\bf A}^{-1})_{ \alpha A } & ({\bf A}^{-1})_{ \alpha \bar\alpha }  
  \end{pmatrix}, 
\end{equation}
the responses of steady-state concentrations and reaction rates to the perturbations of $k_B$ and $\ell^{\bar\beta}$ are 
proportional to the following components, 
\begin{equation}
  \p_B x_i \propto  ({\bf A}^{-1})_{iB} , 
  \quad 
  \p_{\bar\beta} x_i \propto  ({\bf A}^{-1})_{i \bar\beta } , 
  \quad 
  \p_{B} \mu_{\alpha} \propto ({\bf A}^{-1})_{\alpha B} ,
  \quad 
  \p_{\bar\beta} \mu_{\alpha} \propto ({\bf A}^{-1})_{\alpha \bar\beta }.
  \label{eq:response}
\end{equation}

\begin{figure}[tb]
  \centering
  \includegraphics
  [clip, trim=0cm 3cm 0cm 4cm, scale=0.45]
  {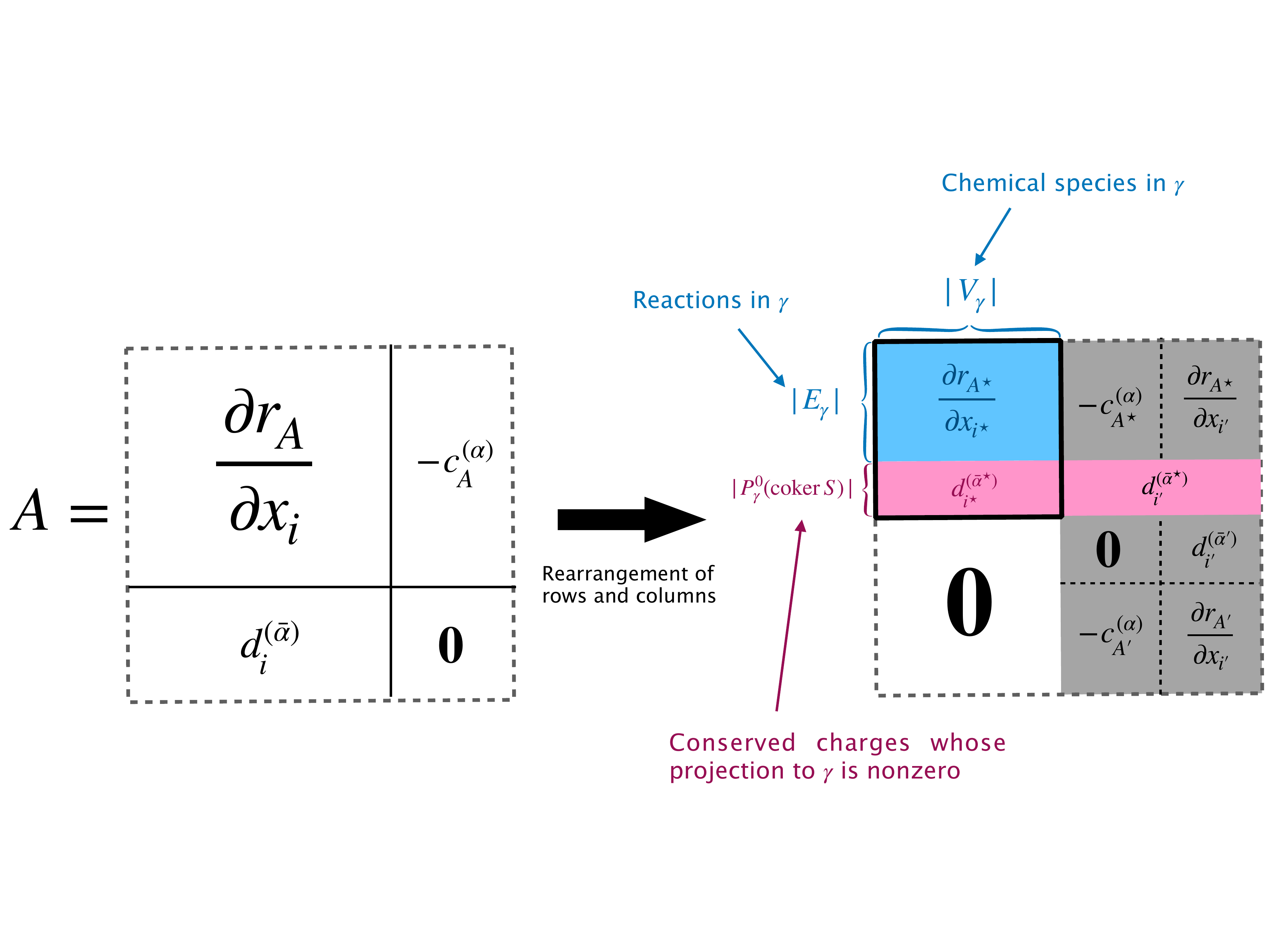} 
  \caption{
Structure of the A-matrix before and after a rearrangement of rows and columns. Indices with stars are those inside $\gamma$, 
and indices with primes are associated with $\Gamma \setminus \gamma$. 
The vectors $\bm d^{({\bar\alpha}^\star)}$ 
are the elements of $\coker S$ whose projections to subnetwork $\gamma$ are nonzero. 
  }
  \label{fig:mat-a}
\end{figure}

For a given output-complete subnetwork $\gamma$, we can bring the rows and columns associated with $\gamma$ to the upper-left corner (see Fig.~\ref{fig:mat-a}). 
All the component of the lower-left part is zero, 
because the reaction rate $r_A$ outside $\gamma$ does not depend on the chemical species in $\gamma$ 
(since $\gamma$ is output-complete). 
When $\lambda_{\rm f}(\gamma) =0$, 
the black box in the upper-left corner is a square matrix. 
Then, ${\bf A}^{-1}$ inherits the same structure, 
\begin{equation}
  {\bf A}^{-1}
  = 
  \begin{pmatrix}
    * & * \\
    \bm 0 & * 
  \end{pmatrix}.
\end{equation}
Namely, if we denote the generic index of ${\bf A}^{-1}$ as $\mu,\nu, \cdots$ and 
write the index inside and outside $\gamma$ 
as $\mu^\star$ and $\mu'$, respectively, 
we have $({\bf A}^{-1})_{\mu' \nu^\star} = 0$. 
Because of this structure, 
\begin{equation}
 \p_{A^\star} \mu_{\alpha} 
 \propto  ({\bf A}^{-1})_{\alpha A^\star} = 0 , 
\quad 
 \p_{{\bar\alpha}^\star} \mu_{\alpha} 
 \propto  ({\bf A}^{-1})_{\alpha {\bar\alpha}^\star} = 0 , 
\end{equation}
for {\it any} $\alpha$. 
Thus, all the steady-state reaction rates are independent of 
the parameters inside strong buffering structures
and the values of conserved charges that have nonzero support inside $\gamma$.

\section{ Proof of Theorem 2 }

We make the following technical assumptions on the nature of conserved charges: 
\begin{assumption} 
For a given element $\bm d$ of $\coker S$, 
let us separate $\bm d$ into 
those inside/outside 
the maximal strong buffering structure $\gamma$ as 
$\bm d = \begin{bmatrix}
\bm d_1 \\
\bm d_2
\end{bmatrix}
$. 
Then, if $\bm d_1 \in \coker S_{11}$,
$
\begin{bmatrix}
\bm d_1 \\
 \bm 0
\end{bmatrix}
\in \coker S
$
holds. 
\label{assumption:charge}
\end{assumption}
Namely, a local conserved charge in a subnetwork $\gamma$ (i.e.\ an element of $\coker S_{11}$) is always a conserved charge in the whole network $\Gamma$. 
We here give a proof that Theorem~2 holds under Assumption~\ref{assumption:charge}. 
In Sec.~\ref{sec-sm:les}, we introduce a long exact sequence which will be used in the proof. 
We will complete the proof in Sec.~\ref{sec:recon-proof}.

\subsection{ Long exact sequence of a pair of chemical reaction networks }\label{sec-sm:les}

To prove Theorem~2, we construct a reduced system 
using the method developed in Ref.~\cite{PhysRevResearch.3.043123SM} 
and show that the steady-state fluxes of the original system can be obtained from the reduced one. 
To capture the changes of cycles and conserved charges associated with a reduction, we use a long exact sequence introduced in Ref.~\cite{PhysRevResearch.3.043123SM}.

Let us introduce its definition. 
We first define the spaces of chains by  
\begin{align}
  C_0 (\Gamma) &\coloneqq 
  \big\{
  \sum_i d_i v_i
  \, |  \, 
  v_i \in V, \, d_i \in \mathbb R 
  \big\} , 
  \\
  C_1 (\Gamma) &\coloneqq 
  \big\{
\sum_A c_A e_A 
  \, |  \,
  e_A \in E, \, c_A \in \mathbb R 
  \big\} .  
\end{align} 
Similarly, 
we define the spaces of chains for a subnetwork $\gamma = (V_\gamma, E_\gamma) \subset \Gamma$, $C_n (\gamma)$ for $n=0,1$, 
to be those generated by $V_\gamma$ and $E_\gamma$. 
We also define the chain spaces of $\Gamma' = \Gamma / \gamma$,
to be the spaces spanned by $(V \setminus V_\gamma, E \setminus E_\gamma)$. 
We will denote the elements of 
$C_1(\gamma)$, 
$C_1(\Gamma)$ 
and 
$C_1(\Gamma')$ 
using a vector,
where each component represents the corresponding coefficient. 
For example, 
\begin{equation}
\bm c_1 \in C_1(\gamma), 
\quad 
\begin{bmatrix}
    \bm c_1 \\
    \bm c_2
\end{bmatrix}
\in C_1(\Gamma), 
\quad 
\bm c_2 \in C_1(\Gamma'), 
\end{equation}
In the above, the element of $C_1(\Gamma)$ are partitioned 
into those inside/outside $\gamma$. 
A similar notation will be used for $C_0(\gamma)$ and so on. 
We consider the following short exact sequence of chain complexes, 
\begin{equation}
  \xymatrix{
    &
    0 
    \ar[d]
    &
    0 \ar[d]
    & 
    0 \ar[d]
    \\
   0 \ar[r] &
   C_1 (\gamma)
   \ar[r]^{\psi_1 }
   \ar[d]^{\p_\gamma } 
   &
   C_1 (\Gamma) 
   \ar[r]^{\varphi_1}
   \ar[d]^{\p }
   & 
   C_1(\Gamma' )
   \ar[r]
   \ar[d]^{\p' }
   & 0
   \\
0
 \ar[r] 
&  
C_0(\gamma)
   \ar[r]^{\psi_0 }
   \ar[d]   
   & 
C_0 (\Gamma) 
  \ar[r]^{\varphi_0}
  \ar[d] 
   &
C_0 (\Gamma' )
   \ar[r]
   \ar[d]
   &
   0
   \\
   & 
   0 
   & 0 
   & 0 
 }
 \label{eq:diag-b}
\end{equation}
where the columns are the chain complexes of $\gamma$, 
$\Gamma$, and $\Gamma'$, respectively. 
The boundary maps are given by the stoichiometric matrices of subnetwork $\gamma$, 
the whole network, and the reduced network\footnote{Recall that the stoichiometric matrix of the reduced system is given by the generalized Schur complement, $S'\coloneqq S_{22} - S_{21}S_{11}^+ S_{12}$, as discussed in the main text}, 
respectively, as
\begin{equation}
  \p_\gamma: \bm c_1 \mapsto S_{11} \bm c_1, 
  \quad 
  \p: \bm c = 
  \begin{bmatrix}
    \bm c_1 \\
    \bm c_2
  \end{bmatrix}
  \mapsto 
  S \bm c, 
  \quad 
  \p': \bm c_2 \mapsto S' \bm c_2. 
\end{equation}
The horizontal maps are given by 
\begin{equation}
  \psi_1: 
  \bm c_1 \mapsto 
  \begin{bmatrix}
    \bm c_1 \\ 
    \bm 0
  \end{bmatrix}, 
  \quad 
  \varphi_1: 
  \begin{bmatrix}
    \bm c_1 \\ 
    \bm c_2
  \end{bmatrix}
  \mapsto \bm c_2 , 
\end{equation}
\begin{equation}
  \psi_0: 
  \bm d_1 \mapsto 
  \begin{bmatrix}
    \bm d_1 \\ 
    S_{21} S^+_{11} \bm d_1
  \end{bmatrix}, 
  \quad 
  \varphi_0: 
  \begin{bmatrix}
    \bm d_1 \\ 
    \bm d_2
  \end{bmatrix}
  \mapsto \bm d_2 - S_{21}S^+_{11} \bm d_1 . 
\end{equation}
One can check that the diagram \eqref{eq:diag-b} commutes when the following condition is satisfied:\footnote{
Note that we use a notation where a unit matrix is denoted by $1$. 
} 
\begin{equation}
  S_{21} (1 - S_{11}^+ S_{11}) \bm c_1 = \bm 0, 
  \label{eq:com-1}
\end{equation}
where $\bm c_1 \in C_1(\gamma)$. 
The matrix $ (1 - S_{11}^+ S_{11})$ 
is the projection matrix to $\ker S_{11}$, 
and Eq.~(\ref{eq:com-1}) is equivalent to 
\begin{equation}
  \ker S_{11} \subset \ker S_{21}  . 
  \label{eq:com-2}
\end{equation}
As shown around Eq.~(182) of Ref.~\cite{PhysRevResearch.3.043123SM}, 
this condition is equivalent to 
$\widetilde c (\gamma) = |\ker S_{11} / (\ker S)_{{\rm supp}\, \gamma}| = 0$.  Note that $\ker S_{11}$ and $(\ker S)_{{\rm supp}\, \gamma}$ are the set of cycles in $\gamma$ and cycles in $\Gamma$ with components belongs to $\gamma$ (see Eq.~\eqref{eq:ker-s-supp-gamma} for details), respectively. Thus,  the number $\widetilde c(\gamma)$ counts emergent cycles, which are cycles in $\gamma$ that cannot be extended to cycles in $\Gamma$
(see Ref.~\cite{PhysRevResearch.3.043123SM} for more details.)
Thus, the diagram \eqref{eq:diag-b} commutes if and only if $\gamma$ has no emergent cycles. 

Assuming Eq.~\eqref{eq:com-2} holds, 
we can apply the snake lemma to the diagram~\eqref{eq:diag-b} and obtain a long exact sequence\footnote{
Note that boundary operators are given by multiplications of stoichiometric matrices, 
so we have 
$\ker \p_\gamma = \ker S_{11}$, 
$\ker \p = \ker S$, and so on. 
}, \\
\begin{equation}
  \xymatrix{
   0 \ar[r] &
   \ker S_{11}
   \ar[r]^{\psi_1} 
   &
   \ker S
   \ar[r]^{\varphi_1}
   &
   \ker S'
   \ar[r]^{\delta_1\quad } &  
   \coker S_{11}
   \ar[r]^{\bar{\psi}_0}
   & 
   \coker S
   \ar[r]^{\bar{\varphi}_0} 
   &
   \coker S'
   \ar[r]
   &
   0 , 
 }  
 \label{eq:les}
\end{equation}
\\ 
where $\bar \psi_0$ and $\bar \varphi_0$ 
are induced maps\footnote{
For example, $\bar \psi_0$ is defined as, 
$\coker S_{11} \ni [\bm d_1]
\mapsto \bar \psi_0 ([\bm d_1]) \coloneqq [\psi_0 (\bm d_1)] \in \coker S$, 
where $[...]$ indicates 
identifying the differences in 
${\rm im\,} S_{11} ({\rm im}\, S)$
for $\coker S_{11} (\coker S)$, respectively.
This map is well-defined in the sense that it does not depend on the choice of a representative $\bm d_1 \in C_0(\gamma)$, 
since if we had picked $\bm d_1 + S_{11} \bm c_1$, 
its image by $\psi_0$ is 
$\psi_0 (\bm d_1 + S_{11} \bm c_1) = 
\psi_0 (\bm d_1) + S 
\begin{bmatrix}
    \bm c_1 \\ \bm 0
\end{bmatrix}
$
and the second term is zero in $\coker S$. 
} of $\psi_0$ and $\varphi_0$. 
The map $\delta_1: \ker S' \to \coker S_{11}$ 
is called the connecting map. 
For a given $\bm c_2 \in \ker S'$, 
the connecting map is given by\footnote{
The  map is identified as follows. 
Pick an element $\bm c_2 \in \ker S'$, which 
can be included in $C_1(\Gamma')$. 
Since $\varphi_1$ is surjective, there exists 
$
\bm c = 
\begin{bmatrix}
  \bm c_1 \\
  \bm c_2 
\end{bmatrix}$ 
such that $\varphi_1 (\bm c) = \bm c_2$. 
From the commutativity of the diagram (\ref{eq:diag-b}), 
we have $\varphi_0 (S\bm c) = S' \bm c_2 = \bm 0$. 
Since the rows of the diagram~(\ref{eq:diag-b}) are exact, 
there exists $\bm d_1 \in C_0(\gamma)$ 
such that $\psi_0 (\bm d_1) = S \bm c \in \ker \varphi_0$. 
We obtain $[\bm d_1]=[S_{11} \bm c_1 + S_{12} \bm c_2]= [S_{12} \bm c_2]\in \coker S_{11}$ by identifying the differences in ${\rm im\,} S_{11}$. 
The mapping $\bm c_2 \mapsto [S_{12} \bm c_2]$ is the connecting map. 
} 
\begin{equation}
\delta_1: 
\bm c_2 \mapsto 
[S_{12} \bm c_2] \in \coker S_{11}, 
\end{equation}
where $[...]$ means to identify the differences in ${\rm im\,} S_{11}$.

\subsection{ Proof of Theorem~2 }\label{sec:recon-proof}

Let us now describe a proof of Theorem~2. 

Firstly, let us rewrite the flux influence index in a form useful for the following proof. 
In Ref.~\cite{PhysRevResearch.3.043123SM}, 
it was shown that the influence index is decomposed as 
\begin{equation}
\lambda(\gamma) = 
\widetilde c(\gamma) + d_l (\gamma) 
- \widetilde d(\gamma),     
\end{equation}
where 
$d_l (\gamma)\coloneqq|(\coker S) / X(\gamma)|
$ with 
\begin{equation}
X (\gamma) 
\coloneqq 
  \left\{ 
    \begin{pmatrix}
      \bm d_1 \\
      \bm d_2
    \end{pmatrix}
    \in \coker S 
    \, \middle| \,
    \bm d_1 \in \coker S_{11}
    \right\},
\end{equation}
and $\widetilde d(\gamma) \coloneqq
|(\coker S_{11}) / D_{11}(\gamma)|$ with
\begin{equation}
  D_{11}(\gamma)
  \coloneqq 
  \left\{ 
    \bm d_1 \in \coker S_{11} 
  \, \middle| \,
  {}^{\exists} \bm d_2 {\text{ such that }} 
  \begin{bmatrix}
    \bm d_1 \\
    \bm d_2
  \end{bmatrix}
  \in \coker S 
  \right\} . 
\end{equation}
Using the relation \eqref{eq:rel-f-index-i-index} and 
noting that $\widetilde c (\gamma) = |\ker S_{11} / (\ker S)_{{\rm supp}\, \gamma}|$, 
we find that the flux influence index is written in the form 
\begin{equation}
  \lambda_{\rm f} (\gamma)
  = |\ker S_{11}| + d_l (\gamma)  - \widetilde d (\gamma) . 
  \label{eq:lambda-f-decom}
\end{equation}
The number $\widetilde d(\gamma)$ 
counts emergent conserved charges in $\gamma$, 
which are conserved only within $\gamma$ and not in $\Gamma$. 
Due to Assumption~\ref{assumption:charge}, 
we have $\widetilde d(\gamma) =0$ and there is no emergent conserved charge
in the current setting. 
Since $|\ker S_{11}|$ and $d_l(\gamma)$ are nonnegative, 
we have $|\ker S_{11}| = d_l(\gamma)=0$. 
Then, $\widetilde c (\gamma) = |\ker S_{11} / (\ker S)_{{\rm supp}\, \gamma}| = 0$
and the diagram \eqref{eq:diag-b} commutes, and the long exact sequence \eqref{eq:les} holds. 
Since $\widetilde d (\gamma)=0$, 
the connecting map $\delta_1$ is a zero map\footnote{
This is shown around Eq.~(192) of Ref.~\cite{PhysRevResearch.3.043123SM}. 
},
and noting that $\ker S_{11} = \bm 0$, 
we have the following exact sequence, 
\begin{equation}
  \xymatrix{
   0 \ar[r] 
   &
   \ker S
   \ar[r] 
   &
   \ker S'
  \ar[r] 
  & 0
 }  , 
\end{equation}
which implies the isomorphism, 
\begin{equation}
\ker S \simeq \ker S'. 
\end{equation}
Let us explicitly construct the bijection. 
Suppose that we picked an element of $\ker S$, 
\begin{equation}
\bm c = 
\begin{bmatrix}
    \bm c_1 \\
    \bm c_2
\end{bmatrix}
\in \ker S. 
\end{equation}
Since $\bm c \in \ker S$, 
\begin{equation}
S \bm c = 
\begin{bmatrix}
S_{11} \bm c_1 + S_{12} \bm c_2 \\
S_{21} \bm c_1 + S_{22} \bm c_2 
\end{bmatrix}
= \bm 0. 
\label{eq:sc-0}
\end{equation}
We can solve this for $\bm c_1$ as 
(note that $\ker S_{11}$ is trivial) 
$
\bm c_1 = - S^+_{11} S_{12} \bm c_2.$ 
Plugging this into the second equation of Eq.~\eqref{eq:sc-0}, we have 
\begin{equation}
S' \bm c_2 = \bm 0,
\end{equation}
where $S'=S_{22} - S_{21}S_{11}^+ S_{12}$.
Namely, we can get an element of $\ker S'$ from 
$\bm c \in \ker S$ as 
\begin{equation}
\ker S \ni \bm c = 
\begin{bmatrix}
\bm c_1 \\
\bm c_2
\end{bmatrix}
\mapsto 
\bm c_2
\in \ker S'. 
\end{equation}

Conversely,
for a given $\bm c_2 \in \ker S'$, 
we can get an element of $\ker S$ by 
\begin{equation}
\ker  S'
\ni
\bm c_2
\mapsto 
\bm c
\coloneqq  
\begin{bmatrix}
- S_{11}^+ S_{12} \bm c_2
\\
\bm c_2 
\end{bmatrix} 
\in \ker S .
\label{sm-eq:map-sp-s}
\end{equation}
Indeed, 
\begin{equation}
S \bm c
= 
\begin{bmatrix}
(1 - S_{11} S^+_{11}) S_{12} \bm c_2 \\
S' \bm c_2
\end{bmatrix}. 
\label{eq:sc-eq}
\end{equation}
The second line is zero by the assumption that $\bm c_2\in \ker S'$. 
%
The first line of Eq.~\eqref{eq:sc-eq} also vanishes, which can be shown as follows. 
%
 By Assumption~\ref{assumption:charge}, for a given $\bm d_1 \in \coker S_{11}, 
 \bm d =
\begin{bmatrix}
\bm d_1 \\
 \bm 0
\end{bmatrix} \in \coker S$. 
 The condition $\bm d^\top S = \bm 0$ can be expressed as follows:
 \begin{eqnarray}
   \bm d_1^\top S_{11} &=& \bm 0 , \label{eq:db1}\\
   \bm d_1^\top S_{12} &=& \bm 0.
   \label{eq:db2}
 \end{eqnarray}
 Therefore, $\bm d_1^\top S_{12} \bm c_2 =  0$ 
 for any $\bm d_1 \in \coker S_{11}$ and $\bm c_2 \in \ker S'$. 
This is equivalent to $S_{12} \bm c_2 \in (\coker S_{11})^\perp$. 
Since $(1 - S_{11}S^+_{11})$ is a projection matrix to $\coker S_{11}$, 
the first line of Eq.~\eqref{eq:sc-eq} vanishes. 
Thus, we have $S \bm c=0$ and $\bm c \in \ker S$.

Thanks to the isomorphism~\eqref{sm-eq:map-sp-s}, 
the reaction system obtained by reducing strong buffering structures 
can give the steady-state fluxes of the original system. 
Suppose that we have steady-state fluxes of $\Gamma'$, 
\begin{equation}
S' \bm r_2 = \bm 0. 
\label{eq:sp-r2-0}
\end{equation}
Let us pick a basis of $\ker S'$, 
\begin{equation}
\ker S' = {\rm span} \, \{ \bm c_2^{(\alpha)} \}_{\alpha=1\ldots|\alpha|} .
\end{equation}
The solution of Eq.~\eqref{eq:sp-r2-0} can be 
expanded as 
\begin{equation}
\bm r_2 
=\sum_\alpha \mu^{(\alpha)} (\bm k_2) \, \bm c^{(\alpha)}_2 . 
\end{equation}
From the solution $\bm r_2$, we can construct the 
steady-state fluxes of the whole network, 
\begin{equation}
\bm r = \sum_\alpha \mu^{(\alpha)} (\bm k_2) \, \bm c^{(\alpha)} ,
\end{equation}
where $\bm c^{(\alpha)}$ is defined by 
\begin{equation}
\bm c^{(\alpha)} \coloneqq 
\begin{bmatrix}
- S_{11}^+ S_{12} \bm c_2^{(\alpha)}
\\
\bm c_2^{(\alpha)} 
\end{bmatrix}. 
\end{equation}
The steady-state fluxes constructed this way do not depend on the parameters within the strong buffering structures, $\bm k_1$, because they do not appear 
in the rate equation of $\Gamma'$. 

Although the reconstructed fluxes satisfy the steady-state condition $S \bm r = \bm 0$, 
we should also check that the equations specifying the values of conserved charges 
do not give rise to additional constraints, which could invalidate the solution. 
The conserved charges are specified 
in the system $\Gamma$ by 
\begin{equation}
  \bm d^{(\bar\alpha)} \cdot \bm x 
  = \ell^{\bar\alpha} . \label{eq:dx-l} 
\end{equation}
As shown in Ref.~\cite{PhysRevResearch.3.043123SM}, 
$\coker S$ can be decomposed as 
\begin{equation}
 \coker S \cong  (\coker S)/X(\gamma) \oplus 
 X(\gamma) / \bar D'(\gamma) 
 \oplus \bar D'(\gamma) ,
\end{equation}
where 
\begin{equation}
  \bar D'(\gamma)  \coloneqq 
    \left\{ 
      \begin{pmatrix}
        \bm d_1 \\
        \bm d_2
      \end{pmatrix}
      \in \coker S 
      \, \middle| \, \bm d_1 = \bm 0 
      \right\} .
\end{equation}
Intuitively speaking, 
an element of $(\coker S)/X(\gamma)$ is a conserved charge in $\Gamma$ whose projection to $\gamma$ is not conserved in $\gamma$,
an element of $X(\gamma)  / \bar D' (\gamma)$ is 
a conserved charge of $\Gamma$ whose projection to $\gamma$ is also conserved in $\gamma$, 
and an element of $\bar D'(\gamma)$ is a conserved charge of $\Gamma$ supported in $\Gamma \setminus \gamma$. 
Recall that $d_l (\gamma)$ is written as 
$d_l (\gamma) = |(\coker S) / X(\gamma)|$. 
When $d_l (\gamma) = 0$, $\coker S$ is written as
\begin{equation}
     \coker S 
     \cong D(\gamma) \oplus \bar D' (\gamma),
\end{equation}
where we defined $D(\gamma)\coloneqq X(\gamma) / \bar D'(\gamma)$. 
Correspondingly, we can divide the basis vectors of $\coker S$ 
into two classes, 
$\{ \bm d^{(\bar\alpha)} \} 
= 
\{
   \bm d^{(\bar\alpha_\gamma)}, \bm d^{(\bar \alpha') } 
\}$, 
where 
$\{ \bm d^{(\bar\alpha_\gamma)}\}$ 
is a basis of $D(\gamma)$, 
and 
$\{ \bm d^{(\bar \alpha')} \}$ 
is a basis of $\bar D'(\gamma)$. 
By Assumption~\ref{assumption:charge}, 
the basis vectors are written in the form
\begin{equation}
  \bm d^{(\bar\alpha_\gamma)}
  = 
  \begin{pmatrix}
    \bm d^{(\bar\alpha_\gamma)}_1 
    \\
    \bm 0
  \end{pmatrix},
  \quad 
  \bm d^{(\bar\alpha')}
  = 
  \begin{pmatrix}
    \bm 0 \\
    \bm d^{(\bar\alpha')}_2
  \end{pmatrix}. 
\end{equation}
With this basis of $\coker S$, 
Eq.~(\ref{eq:dx-l}) is written as 
\begin{align}  
 \bm d_1^{(\bar\alpha_\gamma)} \cdot \bm x_1 
  &= \ell^{\bar\alpha_\gamma}, 
\label{eq:d1x1-l}
\\
\bm d_2^{(\bar\alpha')} \cdot \bm x_2 &= \ell^{\bar\alpha'} . 
\label{eq:d2x2-l}
\end{align}
In fact, 
$\bm d^{(\bar\alpha')}_2$ is a conserved charge in $\Gamma'$, 
$\bm d^{(\bar\alpha')}_2 \in \coker S'$, 
as we see below. 
Since $\bm d^{(\bar\alpha')} \in \coker S$ , 
it satisfies 
\begin{equation}
  \begin{pmatrix}
    \bm 0 & (\bm d^{(\bar\alpha')}_2)^\top
  \end{pmatrix}
\begin{pmatrix}
  S_{11} & S_{12} \\
  S_{21} & S_{22}
\end{pmatrix}
= 
\begin{pmatrix}
  (\bm d^{(\bar\alpha')}_2)^\top S_{21} & (\bm d^{(\bar\alpha')}_2)^\top S_{22}
\end{pmatrix}
= \bm 0 . 
\end{equation}
This implies that $\bm d^{(\bar\alpha')}_2$ satisfies 
\begin{equation}
  (\bm d^{(\bar\alpha')}_2)^\top S' 
  = (\bm d^{(\bar\alpha')}_2)^\top ( S_{22} - S_{21} S^{+}_{11} S_{12}) 
  = \bm 0 , 
\end{equation}
hence $\bm d^{(\bar\alpha')}_2 \in \coker S'$. 
Thus we have obtained an injective map, 
\begin{equation}
  \coker S \ni 
  \bm d^{(\bar\alpha')} = 
  \begin{pmatrix}    
    \bm 0 \\
  \bm d^{(\bar\alpha')}_2 
\end{pmatrix}
\mapsto 
\bm d^{(\bar\alpha')}_2  \in \coker S'. 
  \label{eq:sd-ds}
\end{equation}
This map is nothing but the induced map $\bar \varphi_0$\footnote{
The map $\bar \varphi_0$ is written as 
\begin{equation}
\bar\varphi_0: \coker S \ni 
\begin{bmatrix}
    \bm d_1 \\
    \bm d_2
\end{bmatrix}
\mapsto 
[\bm d_2 - S_{21} S^+_{11} S_{12} \bm d_2]
\in \coker S' ,
\end{equation}
where $[\ldots]$ (the second one) means identifying the difference in 
${\rm im\,} S'$ or equivalently projecting to $\coker S'$. 
The space $D(\gamma)$ is isomorphic to the kernel of $\bar\varphi_0$, and the nontrivial image of $\bar \varphi_0$ comes from $\bar D'(\gamma)$ in the current situation where $d_l(\gamma)=0$. 
For an element $
\begin{bmatrix}
    \bm 0 \\
    \bm d_2
\end{bmatrix} 
\in \bar D'(\gamma)$, 
$\bm d_2$ is already in $\coker S'$ as shown above,
so the mapping is written as 
\begin{equation}
\bar \varphi_0: \coker S 
\supset \bar D'(\gamma)
\ni 
\begin{bmatrix}
    \bm 0 \\
    \bm d_2
\end{bmatrix}
\mapsto 
\bm d_2 \in \coker S'. 
\end{equation}
}, 
and when $\widetilde c(\gamma)=0$, this map is a surjection, which follows from the long exact sequence (\ref{eq:les}). 
The equations satisfied by 
the boundary part (denoted by 2) of the concentrations/rates of $\Gamma$ 
are Eqs.~(\ref{eq:sp-r2-0}) and (\ref{eq:d2x2-l}). 
Since all the conserved charges in $\Gamma'$ 
are given as images of $\bar \varphi_0$, 
the equation specifying the conserved charges 
is given by Eq.~\eqref{eq:d2x2-l}. 
Thus, the equations for conserved charges in 
$\Gamma$, Eqs.~\eqref{eq:d1x1-l} and \eqref{eq:d2x2-l} do not give any additional constraints on $\bm x_2$. 
Therefore, the steady-state rates of $\Gamma$ 
can be obtained from those of $\Gamma'$ via the map~\eqref{sm-eq:map-sp-s}. 
This concludes the proof of Theorem~2.

\section{ Details of examples }

In this section, we give the details of the simple example discussed in the main text. 
We also provide a few additional illustrative examples. 

\subsection{ Example 1 }
Here we consider a reaction system $\Gamma$ with four species and six reactions,
\begin{align}
    e_1 &: \text{(input)} \to v_1, \notag \\
    e_2 &: v_1 \to v_2, \notag \\
    e_3 &: v_2 \to v_3 ,\notag \\
    e_4 &: v_3 \to v_4 ,  \\
    e_5 &: v_4 \to v_1 , \notag \\
    e_6 &: v_3 \to \text{(output)} . \notag
\end{align}

The network $\Gamma$ has the following nontrivial buffering structures, 
\begin{align}    
\gamma_1^\ast &= (\{v_1 \}, \{e_2 \}), \\
\gamma_2^\ast &= (\{v_2\},\{ e_3 \} ), \\
\gamma_3^\ast &= (\{v_4\},\{ e_5 \} ),\\
\gamma_4^\ast &= (\{v_1,v_2\},\{e_2,e_3 \} ) = \gamma_1^\ast \cup \gamma_2^\ast, \\
\gamma_5^\ast &= (\{v_1,v_4\},\{e_2,e_5 \} ) = \gamma_1^\ast \cup \gamma_3^\ast, \\
\gamma_6^\ast &= (\{v_2,v_4\},\{e_3,e_5 \} ) = \gamma_2^\ast \cup \gamma_3^\ast, \\
\gamma_7^\ast &= (\{v_1,v_2,v_4\},\{e_2,e_3,e_5 \} ) 
= \gamma_1^\ast \cup \gamma_2^\ast \cup \gamma_3^\ast, \\
\gamma_8 &= (\{v_1,v_2,v_3,v_4\},\{e_2,e_3,e_4,e_5,e_6 \} ) . 
\end{align}
where the ones with $\ast$ are strong buffering structures,
and those without $\ast$ are ordinary buffering structures.
Among the strong buffering structures, $\gamma_7^\ast$ is maximal, and the minimal form is obtained as $\Gamma_{\rm min} = \Gamma / \gamma_7^\ast$. 
Under the reduction, the stoichiometric matrix changes as 
\begin{equation}
\begin{tikzpicture}
\node at (0, 0) {
$
S =
\begin{blockarray}{cccccccc}
&& \\
\begin{block}{c[ccccccc]}
{\color{mydarkred}v_1} \quad\, 
& -1 & 0 & 1 & 1 & 0 & 0 \\
 {\color{mydarkred}v_2} \quad\, 
& 1 & -1 & 0 & 0 & 0 &0 \\
{\color{mydarkred}v_4} \quad\, 
& 0 & 0 & -1 & 0 & 1 &0 \\
v_3 \quad\, 
& 0 & 1 & 0  & 0 & -1 & -1 \\
\end{block} 
& {\color{mydarkred}e_2} 
& {\color{mydarkred}e_3}
& {\color{mydarkred}e_5}
& e_1 
& e_4 
& e_6
\end{blockarray}  
$
}; 
\draw[mydarkred, dashed,line width=1] 
 (-0.9,-0.21) rectangle (0.8,1);
\node at (-0.1, 1.25) {\color{mydarkred}$S_{11}$} ;

\draw [mybrightblue,ultra thick,-latex]   (3,0) -- (4, 0); 

\node at (5.8, 0) {
$
S' =
\begin{blockarray}{ccccc}
&& \\
\begin{block}{c[cccc]}
v_3 \,\, & 1 & 0 & -1 \\
\end{block} 
& e_1 & e_4 & e_6
\end{blockarray}
          $ 
      }; 

\end{tikzpicture} 
\end{equation}
where we have brought the components in 
$\gamma_7^\ast$ to the upper-left part. 
The reduction can be visualized as 
\begin{equation}\begin{tikzpicture} 
    \node[species] (v1) at (1.25,0) {$v_1$}; 
    \node[species] (v2) at (2.5,0) {$v_2$};
    \node[species] (v3) at (3.75,0) {$v_3$};
    \node[species] (v4) at (2.5,1.5) {$v_4$}; 
    \node (d1) at (0,0) {}; 
    \node (d2) at (5,0) {}; 
 
    \draw [-latex,draw,line width=0.5mm] (d1) edge node[below]{$e_1$} (v1);
    \draw [-latex,draw,line width=0.5mm] (v1) edge node[below]{$e_2$} (v2);
    \draw [-latex,draw,line width=0.5mm] (v2) edge node[below]{$e_3$} (v3);
    \draw [-latex,draw,line width=0.5mm] (v3) edge node[above right]{$e_4$} (v4);
    \draw [-latex,draw,line width=0.5mm] (v4) edge node[above left]{$e_5$} (v1);
    \draw [-latex,draw,line width=0.5mm] (v3) edge node[below]{$e_6$} (d2);
       
   \draw [mybrightblue,ultra thick,-latex]
   (5.5,0) -- (6.5,0); 
   
    \node[species] (red_v3) at (8.25,0) {$v_3$};
    \node (red_d1) at (7,0) {}; 
    \node (red_d2) at (9.5,0) {}; 

    \draw [-latex,line width=0.5mm] (red_d1) edge node[below]{$e_1$} (red_v3);

    \draw [-latex,line width=0.5mm,loop above,out=60,in=120,looseness=10] 
(red_v3) edge node[above]{$e_4$} (red_v3);

    \draw [-latex,line width=0.5mm] (red_v3) edge node[below]{$e_6$} (red_d2);

\end{tikzpicture}
\end{equation}
A set of basis vectors of $\ker S'$ can be picked as
\begin{equation}
\bm c^{(1)}_2 
= 
\begin{bmatrix}
    1 \\
    0 \\
    1
\end{bmatrix},
\quad 
\bm c^{(2)}_2 
= 
\begin{bmatrix}
    0 \\
    1 \\
    0
\end{bmatrix},
\end{equation}

We can use the rate equation of the reduced system 
to compute the steady-state rates of the original system. 
If we take the mass-action kinetics for example, the rate equation is 
\begin{equation}
\frac{d}{dt} x_3 = k_1 - k_6 x_3. 
\end{equation}
At the steady state, $\bar x_3 = k_1 / k_6$, leading to $\bar r_6=k_6 \bar x_3 =  k_1$ and $\bar r_4 =  k_4 \bar x_3 =k_4 k_1 / k_6$. 
Thus, the steady-state fluxes in the reduced system ($\bar{ \bm r}_2= 
\begin{bmatrix}\bar r_1 & \bar r_4 & \bar r_6
\end{bmatrix}^\top$) can be written as
\begin{equation}
\bar{ \bm r}_2 = 
k_1 \bm c^{(1)}_2 + \frac{k_4 k_1}{ k_6 } \bm c^{(2)}_2 . 
\end{equation}
We can reconstruct $\bm c^{(\alpha)}$ from $\bm c_2^{(\alpha)}$; 
the internal components are 
\begin{align}    
\bm c_1^{(1)}
&= - S^+_{11}S_{12} \bm c_2^{(1)}
= 
- 
\begin{bmatrix}
-1 &0 & -1 \\
-1 & -1 & -1 \\
0 & 0 & -1
\end{bmatrix}
\begin{bmatrix}
1 & 0 & 0 \\
0 & 0 & 0 \\
0 & 1 & 0
\end{bmatrix}
\begin{bmatrix}
1 \\
0 \\
1
\end{bmatrix}
= 
\begin{bmatrix}
1 \\
1 \\
0
\end{bmatrix}, 
\\
\bm c_1^{(2)}
&= - S^+_{11}S_{12} \bm c_2^{(2)}
= 
- 
\begin{bmatrix}
-1 &0 & -1 \\
-1 & -1 & -1 \\
0 & 0 & -1
\end{bmatrix}
\begin{bmatrix}
1 & 0 & 0 \\
0 & 0 & 0 \\
0 & 1 & 0
\end{bmatrix}
\begin{bmatrix}
0 \\
1 \\
0
\end{bmatrix}
= 
\begin{bmatrix}
1 \\
1 \\
1 
\end{bmatrix}, 
\end{align}
where the components are organized in the order $\{e_2,e_3,e_5\}$. 
The steady-state fluxes of the original system read 
\begin{equation}
\bar{ \bm r} = 
k_1 \bm c^{(1)} + \frac{k_4 k_1}{ k_6 } \bm c^{(2)} ,
\end{equation}
where 
\begin{equation}
\bm c^{(1)}
\coloneqq 
\begin{bmatrix}
    \bm c_1^{(1)} \\
    \bm c_2^{(1)} 
\end{bmatrix}, 
\quad 
\bm c^{(2)}
\coloneqq 
\begin{bmatrix}
    \bm c_1^{(2)} \\
    \bm c_2^{(2)} 
\end{bmatrix}. 
\end{equation}

\subsection{ Example 2 } 

We here consider the following reaction network with four species and six reactions: 
\begin{align}
    e_1 &: \text{(input)} \to v_1, \notag \\
    e_2 &: \text{(input)} \to v_2, \notag \\
    e_3 &: v_1 + v_2 \to v_3 ,\\
    e_4 &: v_1 \to v_4 , \notag \\
    e_5 &: v_3 + v_4 \to \text{(output)} . \notag \\
    e_6 &: v_4 \to \text{(output)} . \notag
\end{align}
The network structure is drawn as: 
\begin{equation}
\begin{tikzpicture} 

\node (d1) at (0,9.75) {}; 
\node (d2) at (2,9.75) {}; 

\node[species] (v1) at (0,8.5) {$v_1$}; 
\node[species] (v2) at (2,8.5) {$v_2$};

\node[reaction] (e3) at (1,7.5) {$e_3$};

\node[species] (v3) at (1,6.25) {$v_3$};

\node[species] (v4) at (-1,6.25) {$v_4$};

\node[reaction] (e5) at (0,5) {$e_5$};

\node (d3) at (0,3.75) {}; 

\node (d4) at (-2.5,6.25) {}; 

\draw [-latex,draw,line width=0.5mm] (d1) edge node[left]{$e_1$} (v1);
\draw [-latex,draw,line width=0.5mm] (d2) edge node[left]{$e_2$} (v2);

\draw [-latex,draw,line width=0.5mm] (v1) edge (e3);
\draw [-latex,draw,line width=0.5mm] (v2) edge (e3);

\draw [-latex,draw,line width=0.5mm] (e3) edge (v3);
\draw [-latex,draw,line width=0.5mm] (v1) edge node[left]{$e_4$} (v4);

\draw [-latex,draw,line width=0.5mm] (v4) edge (e5);
\draw [-latex,draw,line width=0.5mm] (v3) edge (e5);
\draw [-latex,draw,line width=0.5mm] (e5) edge (d3);

\draw [-latex,draw,line width=0.5mm] (v4) edge  node[above]{$e_6$} (d4);

\end{tikzpicture}
\end{equation}

The stoichiometric matrix is 
\begin{equation}\label{eq:ex1-stoi-mat}
    S = \begin{bmatrix}
        1 & 0 & -1 & -1 & 0 & 0 \\
        0 & 1 & -1 & 0 & 0 & 0 \\
        0 & 0 & 1 & 0 & -1 & 0 \\
        0 & 0 & 0 & 1 & -1 & -1 
    \end{bmatrix},
\end{equation}
and reaction rate functions are  
\begin{equation}
\begin{split}
\bm r (\bm x)    
&= 
\begin{bmatrix}
r_1 & r_2 & r_3 (x_1,x_2) & r_4 (x_1) & r_5 (x_3,x_4) & r_6 (x_4)
\end{bmatrix}^\top 
\\
&= 
\begin{bmatrix}
k_1 & k_2 & k_3 x_1x_2 & k_4x_1 & k_5 x_3 x_4 & k_6x_4 
\end{bmatrix}^\top . 
\end{split}
\end{equation}

We can identify all the nontrivial buffering structures in this example as 
\begin{align}
\gamma^\ast_1 &= (\{v_2\},\{ e_3 \}), \\
\gamma^\ast_2 &= (\{v_3\},\{ e_5 \}), \\
\gamma^\ast_3 &= (\{v_1,v_2\},\{ e_3,e_4 \}), \\
\gamma^\ast_4 &= (\{v_2,v_3\},\{ e_3,e_5 \}) = \gamma^\ast_1 \cup \gamma^\ast_2 , \\
\gamma^\ast_5 &= (\{v_3,v_4\},\{ e_5,e_6 \}), \\
\gamma^\ast_6 &= (\{v_1,v_2,v_3\},\{ e_3,e_4,e_5 \}) = \gamma^\ast_2 \cup \gamma^\ast_3, \\
\gamma^\ast_7 &= (\{v_2,v_3,v_4\},\{ e_3,e_5,e_6 \}), \\
\gamma^\ast_8 &= (\{v_1,v_2,v_3,v_4\},\{ e_3,e_4,e_5,e_6 \}) = \gamma^\ast_3 \cup \gamma^\ast_5 . 
\end{align}
All the eight buffering structures are strong buffering structures. Among these, $\gamma^\ast_8$ is maximal, i.e., all the strong buffering structures are subsets of $\gamma^\ast_8$. Thus, by eliminating $\gamma^\ast_8$, we can get the minimal form of the given reaction network, $\Gamma' \coloneqq \Gamma / \gamma^\ast_8$  with no species and two reactions $e_1$ and $e_2$. 
The stoichiometric matrix of the minimal form, $S'$, is given as follows:
\begin{equation}
\begin{tikzpicture}
\node at (0, 0) {
$
S =
\begin{blockarray}{cccccccc}
&& \\
\begin{block}{c[ccccccc]}
{\color{mydarkred}v_1} \quad\, 
& -1 & -1 & 0 & 0 & 1 & 0 \\
 {\color{mydarkred}v_2} \quad\, 
& -1 & 0 & 0 & 0 & 0 & 1 \\
{\color{mydarkred}v_3} \quad\, 
& 1 & 0 & -1 & 0 & 0 & 0 \\
{\color{mydarkred}v_4} \quad\, 
& 0 & 1 & -1 & -1 & 0 & 0 \\
\end{block} 
& {\color{mydarkred}e_3} & {\color{mydarkred}e_4}
& {\color{mydarkred}e_5} & {\color{mydarkred}e_6} & e_1 & e_2
\end{blockarray}  
$
}; 
\draw[mydarkred, dashed,line width=1] 
 (-1.0,-0.7) rectangle (1.5,1.1);
\node at (0.2, 1.4) {\color{mydarkred}$S_{11}$} ;

\draw [mybrightblue,ultra thick,-latex]   (3,0) -- (4, 0); 

\node at (5.5, 0) {
$
S' =
\begin{blockarray}{cccc}
&& \\
&& \\
\begin{block}{c[ccc]}
\,\, & \cdot & \cdot \\
\end{block} 
& e_1 & e_2
\end{blockarray}
          $ 
      }; 

\end{tikzpicture} 
\label{eq:ex-1-s-sp}
\end{equation}
Note that the reduced stoichiometric matrix, $S'$, of size $0 \times 2$, which means that it is a linear map from the two-dimensional vector space to the zero-dimensional one. 
In the minimal form, the steady-state fluxes
is obtained trivially as 
\begin{equation}
\begin{bmatrix}
r_1 \\
r_2
\end{bmatrix}
= 
\begin{bmatrix}
k_1 \\
k_2 
\end{bmatrix}
= 
k_1 \, \bm c_2^{(1)}
+ 
k_2 \, \bm c_2^{(2)},
\end{equation}
where we picked basis vectors of $\ker S'$ as 
$\bm c_2^{(1)} \coloneqq
\begin{bmatrix} 
1 & 0 
\end{bmatrix}^\top
$
and 
$
\bm c_2^{(2)} \coloneqq
\begin{bmatrix}
0 & 1
\end{bmatrix}^\top
$. 
Using this, we can compute the steady-state fluxes of the original system as 
\begin{equation}\label{eq:rate-ss}
\bar{\bm r} 
    = 
    k_1 
    \bm c^{(1)}
    + 
    k_2 
    \bm c^{(2)}, 
\end{equation}
where 
$\bm c^{(1)} \coloneqq 
\begin{bmatrix}
    1 & 0 & 0 & 1 & 0 & 1
\end{bmatrix}^\top
$
and 
$
\bm c^{(2)} \coloneqq 
\begin{bmatrix}
    0 & 1 & 1 & -1 & 1 & -2
\end{bmatrix}^\top
$ are the images of 
$\bm c_2^{(1)}$ and $\bm c_2^{(2)}$ 
via the map $\ker S' \to \ker S$, 
and they form a basis of $\ker S$. 
Since the rate constants of the reactions contained in the strong buffering structure do not affect the steady-state reaction fluxes, $\bar{\bm r}$ is independent of the rate constants $k_3, k_4, k_5$, and $k_6$ as we can easily check in Eq.~\eqref{eq:rate-ss}.

\subsection{ Example 3 } 

Here we describe an example with a conserved charge. We consider a network $\Gamma = (\{v_1,v_2,v_3,v_4 \},\{e_1,e_2,e_3,e_4,e_5 \})$ 
with the following reactions, 
\begin{align}
    e_1 &: \text{(input)} \to v_1, \notag \\
    e_2 &: v_1 \to v_2, \notag \\
    e_3 &: v_2 \to \text{(output)} ,\\
    e_4 &: v_1+v_2 \to v_3+v_4 , \notag \\
    e_5 &: v_3+v_4 \to v_1+v_2. \notag
\end{align}

The network structure can be drawn as 
\begin{equation}
    \begin{tikzpicture}[bend angle=45] 
    \node[species] (v1) at (0,0) {$v_1$}; 
    \node[species] (v2) at (2,0) {$v_2$};
 
    \node (d1) at (-1.25,0) {};
    \node (d2) at (3.25,0) {};
    
    \node[reaction] (e4) at (0, -1.65) {$e_4$}; 
    \node[reaction] (e5) at (2,-1.65) {$e_5$}; 

    \node[species] (v3) at (0,-3.25) {$v_3$}; 
    \node[species] (v4) at (2,-3.25) {$v_4$};

    \draw[-latex, line width=0.5mm] (d1) edge node[below] {$e_1$} (v1); 
    \draw[-latex, line width=0.5mm] (v1) edge node[below] {$e_2$} (v2); 
    \draw[-latex, line width=0.5mm] (v2) edge node[below] {$e_3$} (d2); 
  
    \draw[-latex, line width=0.5mm] (v1) -- (e4); 
    \draw[-latex, line width=0.5mm] (v2) -- (e4); 
    \draw[-latex, line width=0.5mm] (e4) -- (v3); 
    \draw[-latex, line width=0.5mm] (e4) -- (v4); 
 
    \draw[-latex, line width=0.5mm] (v3) -- (e5); 
    \draw[-latex, line width=0.5mm] (v4) -- (e5); 
    \draw[-latex, line width=0.5mm] (e5) -- (v1); 
    \draw[-latex, line width=0.5mm] (e5) -- (v2); 
 
\end{tikzpicture}
\end{equation}
If we use the mass-action kinetics, 
the rate equation is written as 
\begin{equation}
    \frac{d}{dt}
    \begin{bmatrix}
    x_1 \\
    x_2 \\
    x_3 \\
    x_4 
    \end{bmatrix}
    = 
    \begin{bmatrix}
      1 & -1 & 0 & -1 & 1 \\
      0 & 1 & -1 & -1 & 1 \\
      0 & 0 & 0 & 1 & -1 \\
     0 & 0 & 0 & 1 & -1 
    \end{bmatrix}
    \begin{bmatrix}
      r_1 \\
      r_2 \\
      r_3 \\
      r_4 \\
      r_5 
    \end{bmatrix}, 
    \quad 
    \begin{bmatrix}
      r_1 \\
      r_2 \\
      r_3 \\
      r_4 \\
      r_5 
    \end{bmatrix}
    = 
    \begin{bmatrix}
      k_1 \\
      k_2 x_1 \\
      k_3 x_2 \\
      k_4 x_1 x_2 \\
      k_5 x_3 x_4 
    \end{bmatrix}.
\end{equation}
The kernel and cokernel of the stoichiometric matrix are given by 
\begin{align}
    \ker S &= {\rm span\,} \{ 
    \begin{bmatrix}
      1 & 1 & 1 & 0 & 0 
    \end{bmatrix}^\top,
    \begin{bmatrix}
      0 & 0 & 0 & 1 & 1 
    \end{bmatrix}^\top
    \},  \\
\coker S &= {\rm span\,} \{ 
    \begin{bmatrix}
     0 & 0 & 1 & -1 
    \end{bmatrix}^\top 
    \} .
\end{align}
The cokernel is one-dimensional and 
the system has one conserved charge. 
To find the steady states, we need to specify the value 
of the charge as 
\begin{equation}
    \ell = x_3 - x_4 . 
\end{equation}
The steady-state reaction rates and concentrations are 
\begin{align}
    \bar{\bm r}
    &= k_1 
    \begin{bmatrix}
      1 & 1 & 1 & 0 & 0 
    \end{bmatrix}^\top
    +
    \frac{k_4 k_1^2}{k_2 k_3} 
    \begin{bmatrix}
      0 & 0 & 0 & 1 & 1 
    \end{bmatrix}^\top ,\\ 
    \bar {\bm x} 
    &= 
    \begin{bmatrix}
      \frac{k_1}{k_2} & \frac{k_1}{ k_3}
      & \frac{1}{2} 
      \left(
       \ell + \sqrt{\ell^2 + 4 K}
      \right)
      & 
      & \frac{1}{2} 
      \left(
      - \ell + \sqrt{\ell^2 + 4 K}
      \right)
    \end{bmatrix}^\top, 
\end{align}
where we set $K \coloneqq k_4 k_1^2 / k_2 k_3$.
The buffering structures (except for the network itself) in this example are
\begin{align}
\gamma^\ast_1 &= (\{v_3,v_4 \},\{e_5 \} ), \\
\gamma_2 &= (\{v_1,v_3,v_4 \},\{e_2,e_4,e_5 \} ), \\
\gamma_3 &= (\{v_2,v_3,v_4 \},\{e_3,e_4,e_5 \} ) .
\end{align}
Among these, $\gamma_1^\ast$ is a strong buffering structure, which includes a conserved charge $\ell = x_3 - x_4$. Correspondingly, the steady-state reaction rate does not depend on $k_5$ and $\ell$. 

The maximal strong buffering structure is $\gamma_1^\ast$. 
Under the reduction to the minimal form, $\Gamma_{\rm min}\coloneqq \Gamma / \gamma_1^\ast$, the stoichiometric matrix changes as
\begin{equation}
\begin{tikzpicture}
\node at (0, 0) {
$
S =
\begin{blockarray}{ccccccc}
&& \\
\begin{block}{c[cccccc]}
{\color{mydarkred}v_3} \quad\, 
& -1& 0 & 0 & 0 & 1  \\
 {\color{mydarkred}v_4} \quad\, 
&-1 & 0 & 0 & 0 & 1  \\
v_1 \quad\, 
& 1 & 1 & -1 & 0 & -1  \\
v_2 \quad\, 
& 1 & 0 & 1 & -1 & -1  \\
\end{block} 
& {\color{mydarkred}e_5} 
& e_1
& e_2
& e_3
& e_4 
\end{blockarray}  
$
}; 
\draw[mydarkred, dashed,line width=1] 
 (-0.9,0.2) rectangle (0,1);
\node at (-0.5, 1.25) {\color{mydarkred}$S_{11}$} ;

\draw [mybrightblue,ultra thick,-latex]   (3,0) -- (4, 0); 

\node at (6.3, 0) {
$
S' =
\begin{blockarray}{cccccc}
&& \\
\begin{block}{c[ccccc]}
v_1 \,\, & 1 & -1 & 0 & 0 \\
v_2 \,\, & 0 & 1 & -1 & 0 \\
\end{block} 
& e_1 & e_2 & e_3 & e_4
\end{blockarray}
          $ 
      }; 

\end{tikzpicture} 
\label{eq:ex-v4-e5-s-sp}
\end{equation}
where we have brought the components in 
$\gamma_1^\ast$ to the upper-left part. 

\section{Metabolic pathways of {\it E.~coli}}

Here we describe the details of the metabolic pathways of {\it E.~coli} discussed in the main text. 

\subsection{List of reactions}\label{sec:ecoli-list}

1: Glucose  +  PEP  $\rightarrow$  G6P  +  PYR. 
 
 2: G6P   $\rightarrow$  F6P. 
 
 3: F6P   $\rightarrow$  G6P.
  
  4: F6P  $\rightarrow$  F16P. 
   
   5: F16P  $\rightarrow$  G3P  +  DHAP. 
   
   6: DHAP   $\rightarrow$  G3P.
   
 7: G3P   $\rightarrow$  3PG. 
  
  8: 3PG   $\rightarrow$  PEP. 
  
  9: PEP   $\rightarrow$  3PG.  
  
  10: PEP   $\rightarrow$  PYR. 
  
  11: PYR   $\rightarrow$  PEP. 
  
  12: PYR    $\rightarrow$  AcCoA  +   CO2. 
  
  13: G6P  $\rightarrow$  6PG. 
  
  14: 6PG  $\rightarrow$   Ru5P  +  CO2. 
  
  15: Ru5P  $\rightarrow$  X5P.
  
   16: Ru5P  $\rightarrow$   R5P. 
   
   17: X5P  +  R5P  $\rightarrow$   G3P  +  S7P. 
   
   18: G3P  +  S7P  $\rightarrow$   X5P  +  R5P. 
   
   19: G3P  +  S7P  $\rightarrow$   F6P  +  E4P.
   
    20: F6P  +  E4P  $\rightarrow$  G3P  +  S7P. 
    
21:  X5P  +  E4P  $\rightarrow$   F6P  +  G3P. 
  
  22: F6P   +  G3P  $\rightarrow$   X5P  +  E4P.  
  
  23: AcCoA  +  OAA  $\rightarrow$  CIT. 
  
  24: CIT   $\rightarrow$  ICT. 
  
  25: ICT  $\rightarrow$  2${\rm \mathchar`-}$KG  +  CO2. 
  
  26: 2-KG  $\rightarrow$   SUC  +  CO2.
  
   27: SUC   $\rightarrow$  FUM. 
   
  28:  FUM  $\rightarrow$  MAL. 
   
   29: MAL   $\rightarrow$  OAA.
   
 30: OAA   $\rightarrow$  MAL.

 31: PEP  +  CO2  $\rightarrow$  OAA.

 32: OAA  $\rightarrow$  PEP  +   CO2. 

 33: MAL  $\rightarrow$   PYR  +  CO2.

34: ICT   $\rightarrow$  SUC  +  Glyoxylate. 

 35: Glyoxylate  +  AcCoA  $\rightarrow$  MAL. 

 36: 6PG  $\rightarrow$   G3P  +  PYR. 

 37: AcCoA  $\rightarrow$   Acetate. 

38:  PYR  $\rightarrow$  Lactate. 

 39: AcCoA  $\rightarrow$  Ethanol. 

 40: R5P  $\rightarrow$ (output).

 41: OAA  $\rightarrow$ (output).

 42: CO2  $\rightarrow$ (output).

43:  (input) $\rightarrow$  Glucose. 

 44:  Acetate $\rightarrow$ (output).
 
  45: Lactate $\rightarrow$ (output).

46:  Ethanol $\rightarrow$ (output).
\\

\subsection{ Buffering structures }\label{sec:ecoli-buffering-list}

We here give the list of 
buffering structures 
in the metabolic pathways of {\it E.~coli}. 
Strong buffering structures 
are indicated by $\ast$.

$\gamma^\ast_1=(\{ \rm 
Glucose
\},\{
1
\})$,

$\gamma_2=(\{ \rm 
Glucose,
PEP,
G6P,
F6P,
F16P,
DHAP,
G3P,
3PG,
PYR,
6PG,
Ru5P,
X5P,
R5P, 
S7P,
E4P,
AcCoA,
OAA, \\
CIT,
ICT, 
2{\rm \mathchar`-KG},
SUC,
FUM,
MAL,
CO2,
Glyoxylate,
Acetate,
Lactate,
Ethanol
\},  \{
1,
2 ,
3 ,
4 ,
5 ,
6 ,
7 ,
8 ,
9 ,
10 ,
11 ,
12 ,
13 ,
14 , \\
15 , 
16 ,
17 ,
18 ,
19 ,
20 , 
21 ,
22 ,
23 ,
24 ,
25 ,
26 ,
27 , 
28 ,
29 ,
30 ,
31 ,
32 ,
33 ,
34 ,
35 ,
36 ,
37 ,
38 ,
39 ,
40 ,
41 ,
42 ,
44 ,
45 ,
46 
\})$,

$\gamma^\ast_3=(\{ \rm 
F16P
\},\{
5 
\})$,

$\gamma^\ast_4=(\{ \rm 
DHAP
\},\{
6 
\})$,

$\gamma_5=(\{ \rm 
G3P,
X5P,
S7P,
E4P
\},\{
7 ,
17 ,
18 ,
19 ,
20 ,
21 ,
22 
\})$,

$\gamma^\ast_6=(\{ \rm 
3PG
\},\{
8 
\})$,

$\gamma_7=(\{ \rm 
Glucose,
PEP,
3PG,
PYR,
AcCoA,
OAA,
CIT,
ICT,
2{\rm \mathchar`-KG},
SUC,
FUM, 
MAL,
CO2,
Glyoxylate,
Acetate, \\
Lactate,  
Ethanol
\},
\{
1,
8 ,
9 ,
10 ,
11 ,
12 ,
23 ,
24 ,
25 ,
26 ,
27 ,
28 ,
29 ,
30 ,
31 ,
32 ,
33 ,
34 ,
35 ,
37 ,
38 ,
39 ,
41 ,
42 ,
44 ,
45 ,
46 
\})$,

$\gamma_8=(\{ \rm 
X5P,
S7P,
E4P
\},\{
17 ,
18 ,
19 ,
20 ,
21 
\})$ ,

$\gamma^\ast_9=(\{ \rm 
CIT
\},\{
24 
\})$,

$\gamma^\ast_{10}=(\{ \rm 
2{\rm \mathchar`-KG}
\},\{
26 
\})$,

$\gamma^\ast_{11}=(\{ \rm 
SUC
\},\{
27 
\})$ ,

$\gamma^\ast_{12}=(\{ \rm 
FUM
\},\{
28 
\})$ 

$\gamma^\ast_{13}=(\{ \rm 
Glyoxylate
\},\{
35 
\})$,

$\gamma_{14}=(\{ \rm 
X5P,
R5P,
S7P,
E4P
\},\{
17 ,
18 ,
19 ,
20 ,
21 ,
40 
\})$ ,

$\gamma^\ast_{15}=(\{ \rm 
Acetate
\},\{
44 
\})$,

$\gamma^\ast_{16}=(\{ \rm 
Lactate
\},\{
45 
\})$,

$\gamma^\ast_{17}=(\{ \rm 
Ethanol
\},\{
46 
\})$.
\\

The maximal strong buffering structure is $\bar\gamma_{\rm s} = \gamma^\ast_{1} \cup \gamma^\ast_{3} \cup \gamma^\ast_{4} \cup \gamma^\ast_{6} \cup \gamma^\ast_{9} \cup \gamma^\ast_{10} \cup \gamma^\ast_{11} \cup \gamma^\ast_{12} \cup \gamma^\ast_{13} \cup \gamma^\ast_{15} \cup \gamma^\ast_{16} \cup \gamma^\ast_{17}. $

\subsection{ Parameters for the simulation in Figure~2 }\label{sec:ecoli-parameter-list}

In the simulation shown in Fig.~2, we employ the mass-action kinetics. In the simulation, the initial concentrations and the reaction rate constants are chosen as: 
$
x_{\rm AcCoA} = 0.4,
x_{\rm Acetate} = 1.0,
x_{\rm CIT} = 0.2,
x_{\rm CO2} = 0.6,
x_{\rm DHAP} = 0.2,
x_{\rm E4P} = 0.2,
x_{\rm Ethanol} = 0.5,
x_{\rm F16P} = 0.5,
x_{\rm F6P} = 0.3,
x_{\rm FUM} = 0.1,
x_{\rm G3P} = 1.3,
x_{\rm G6P} = 0.9,
x_{\rm Glucose} = 3.1,
x_{\rm Glyoxylate} = 0.1,
x_{\rm ICT} = 0.1,
x_{\rm KG2} = 0.1,
x_{\rm Lactate} = 0.5,
x_{\rm MAL} = 0.1,
x_{\rm OAA} = 0.1,
x_{\rm PEP} = 0.3,
x_{\rm PG3} = 1.2,
x_{\rm PG6} = 0.3,
x_{\rm PYR} = 0.8,
x_{\rm R5P} = 0.1,
x_{\rm Ru5P} = 0.1,
x_{\rm S7P} = 0.1,
x_{\rm SUC} = 0.1,
x_{\rm X5P} = 4.0,
\text{ and }
k_1 = 5.0,
k_2 = 4.7,
k_3 = 7.8,
k_4 = 5.7,
k_5 = 3.8,
k_6 = 9.7,
k_7 = 5.0,
k_8 = 6.2,
k_9 = 3.5,
k_{10} = 9.8,
k_{11} = 2.5,
k_{12} = 6.1,
k_{13} = 4.0,
k_{14} = 3.8,
k_{15} = 7.8,
k_{16} = 6.6,
k_{17} = 3.8,
k_{18} = 5.5,
k_{19} = 5.7,
k_{20} = 4.7,
k_{21} = 8.0,
k_{22} = 7.3,
k_{23} = 9.2,
k_{24} = 1.1,
k_{25} = 9.6,
k_{26} = 7.4,
k_{27} = 7.4,
k_{28} = 8.3,
k_{29} = 6.2,
k_{30} = 6.4,
k_{31} = 6.2,
k_{32} = 7.9,
k_{33} = 9.1,
k_{34} = 6.7,
k_{35} = 1.6,
k_{36} = 9.6,
k_{37} = 4.7,
k_{38} = 5.1,
k_{39} = 7.3,
k_{40} = 3.8,
k_{41} = 8.4,
k_{42} = 9.7,
k_{43} = 4.8,
k_{44} = 2.0,
k_{45} = 8.0,
k_{46} = 3.7.
$
At time $t = 10$, the rate constant $k_8$, which is inside the maximum strong buffering structure $\gamma^\ast$, is tripled. At time $t=30$ the rate constant $k_2$, which is outside $\gamma^\ast$, is doubled.




\begin{thebibliography}{32}%
\makeatletter
\providecommand \@ifxundefined [1]{%
 \@ifx{#1\undefined}
}%
\providecommand \@ifnum [1]{%
 \ifnum #1\expandafter \@firstoftwo
 \else \expandafter \@secondoftwo
 \fi
}%
\providecommand \@ifx [1]{%
 \ifx #1\expandafter \@firstoftwo
 \else \expandafter \@secondoftwo
 \fi
}%
\providecommand \natexlab [1]{#1}%
\providecommand \enquote  [1]{``#1''}%
\providecommand \bibnamefont  [1]{#1}%
\providecommand \bibfnamefont [1]{#1}%
\providecommand \citenamefont [1]{#1}%
\providecommand \href@noop [0]{\@secondoftwo}%
\providecommand \href [0]{\begingroup \@sanitize@url \@href}%
\providecommand \@href[1]{\@@startlink{#1}\@@href}%
\providecommand \@@href[1]{\endgroup#1\@@endlink}%
\providecommand \@sanitize@url [0]{\catcode `\\12\catcode `\$12\catcode
  `\&12\catcode `\#12\catcode `\^12\catcode `\_12\catcode `\%12\relax}%
\providecommand \@@startlink[1]{}%
\providecommand \@@endlink[0]{}%
\providecommand \url  [0]{\begingroup\@sanitize@url \@url }%
\providecommand \@url [1]{\endgroup\@href {#1}{\urlprefix }}%
\providecommand \urlprefix  [0]{URL }%
\providecommand \Eprint [0]{\href }%
\providecommand \doibase [0]{https://doi.org/}%
\providecommand \selectlanguage [0]{\@gobble}%
\providecommand \bibinfo  [0]{\@secondoftwo}%
\providecommand \bibfield  [0]{\@secondoftwo}%
\providecommand \translation [1]{[#1]}%
\providecommand \BibitemOpen [0]{}%
\providecommand \bibitemStop [0]{}%
\providecommand \bibitemNoStop [0]{.\EOS\space}%
\providecommand \EOS [0]{\spacefactor3000\relax}%
\providecommand \BibitemShut  [1]{\csname bibitem#1\endcsname}%
\let\auto@bib@innerbib\@empty
\bibitem [{\citenamefont {Kitano}(2004)}]{kitano2004biological}%
  \BibitemOpen
  \bibfield  {author} {\bibinfo {author} {\bibfnamefont {H.}~\bibnamefont
  {Kitano}},\ }\bibfield  {title} {\bibinfo {title} {Biological robustness},\
  }\href {https://doi.org/https://doi.org/10.1038/nrg1471} {\bibfield
  {journal} {\bibinfo  {journal} {Nature Reviews Genetics}\ }\textbf {\bibinfo
  {volume} {5}},\ \bibinfo {pages} {826} (\bibinfo {year} {2004})}\BibitemShut
  {NoStop}%
\bibitem [{\citenamefont {Kitano}(2007)}]{kitano2007towards}%
  \BibitemOpen
  \bibfield  {author} {\bibinfo {author} {\bibfnamefont {H.}~\bibnamefont
  {Kitano}},\ }\bibfield  {title} {\bibinfo {title} {Towards a theory of
  biological robustness},\ }\href
  {https://doi.org/https://doi.org/10.1038/msb4100179} {\bibfield  {journal}
  {\bibinfo  {journal} {Molecular systems biology}\ }\textbf {\bibinfo {volume}
  {3}},\ \bibinfo {pages} {137} (\bibinfo {year} {2007})}\BibitemShut {NoStop}%
\bibitem [{\citenamefont {Barkai}\ and\ \citenamefont
  {Leibler}(1997)}]{barkai1997robustness}%
  \BibitemOpen
  \bibfield  {author} {\bibinfo {author} {\bibfnamefont {N.}~\bibnamefont
  {Barkai}}\ and\ \bibinfo {author} {\bibfnamefont {S.}~\bibnamefont
  {Leibler}},\ }\bibfield  {title} {\bibinfo {title} {Robustness in simple
  biochemical networks},\ }\href
  {https://doi.org/https://doi.org/10.1038/43199} {\bibfield  {journal}
  {\bibinfo  {journal} {Nature}\ }\textbf {\bibinfo {volume} {387}},\ \bibinfo
  {pages} {913} (\bibinfo {year} {1997})}\BibitemShut {NoStop}%
\bibitem [{\citenamefont {Alon}\ \emph {et~al.}(1999)\citenamefont {Alon},
  \citenamefont {Surette}, \citenamefont {Barkai},\ and\ \citenamefont
  {Leibler}}]{alon1999robustness}%
  \BibitemOpen
  \bibfield  {author} {\bibinfo {author} {\bibfnamefont {U.}~\bibnamefont
  {Alon}}, \bibinfo {author} {\bibfnamefont {M.~G.}\ \bibnamefont {Surette}},
  \bibinfo {author} {\bibfnamefont {N.}~\bibnamefont {Barkai}},\ and\ \bibinfo
  {author} {\bibfnamefont {S.}~\bibnamefont {Leibler}},\ }\bibfield  {title}
  {\bibinfo {title} {Robustness in bacterial chemotaxis},\ }\href
  {https://doi.org/https://doi.org/10.1038/16483} {\bibfield  {journal}
  {\bibinfo  {journal} {Nature}\ }\textbf {\bibinfo {volume} {397}},\ \bibinfo
  {pages} {168} (\bibinfo {year} {1999})}\BibitemShut {NoStop}%
\bibitem [{\citenamefont {Ferrell}(2016)}]{FERRELL201662}%
  \BibitemOpen
  \bibfield  {author} {\bibinfo {author} {\bibfnamefont {J.~E.}\ \bibnamefont
  {Ferrell}},\ }\bibfield  {title} {\bibinfo {title} {Perfect and near-perfect
  adaptation in cell signaling},\ }\href
  {https://doi.org/https://doi.org/10.1016/j.cels.2016.02.006} {\bibfield
  {journal} {\bibinfo  {journal} {Cell Systems}\ }\textbf {\bibinfo {volume}
  {2}},\ \bibinfo {pages} {62} (\bibinfo {year} {2016})}\BibitemShut {NoStop}%
\bibitem [{\citenamefont {Aoki}\ \emph {et~al.}(2019)\citenamefont {Aoki},
  \citenamefont {Lillacci}, \citenamefont {Gupta}, \citenamefont
  {Baumschlager}, \citenamefont {Schweingruber},\ and\ \citenamefont
  {Khammash}}]{aoki2019universal}%
  \BibitemOpen
  \bibfield  {author} {\bibinfo {author} {\bibfnamefont {S.~K.}\ \bibnamefont
  {Aoki}}, \bibinfo {author} {\bibfnamefont {G.}~\bibnamefont {Lillacci}},
  \bibinfo {author} {\bibfnamefont {A.}~\bibnamefont {Gupta}}, \bibinfo
  {author} {\bibfnamefont {A.}~\bibnamefont {Baumschlager}}, \bibinfo {author}
  {\bibfnamefont {D.}~\bibnamefont {Schweingruber}},\ and\ \bibinfo {author}
  {\bibfnamefont {M.}~\bibnamefont {Khammash}},\ }\bibfield  {title} {\bibinfo
  {title} {A universal biomolecular integral feedback controller for robust
  perfect adaptation},\ }\href
  {https://doi.org/https://doi.org/10.1038/s41586-019-1321-1} {\bibfield
  {journal} {\bibinfo  {journal} {Nature}\ }\textbf {\bibinfo {volume} {570}},\
  \bibinfo {pages} {533} (\bibinfo {year} {2019})}\BibitemShut {NoStop}%
\bibitem [{\citenamefont {Khammash}(2021)}]{KHAMMASH2021509}%
  \BibitemOpen
  \bibfield  {author} {\bibinfo {author} {\bibfnamefont {M.~H.}\ \bibnamefont
  {Khammash}},\ }\bibfield  {title} {\bibinfo {title} {Perfect adaptation in
  biology},\ }\href
  {https://doi.org/https://doi.org/10.1016/j.cels.2021.05.020} {\bibfield
  {journal} {\bibinfo  {journal} {Cell Systems}\ }\textbf {\bibinfo {volume}
  {12}},\ \bibinfo {pages} {509} (\bibinfo {year} {2021})}\BibitemShut
  {NoStop}%
\bibitem [{\citenamefont {Kanehisa}\ and\ \citenamefont
  {Goto}(2000)}]{10.1093/nar/28.1.27}%
  \BibitemOpen
  \bibfield  {author} {\bibinfo {author} {\bibfnamefont {M.}~\bibnamefont
  {Kanehisa}}\ and\ \bibinfo {author} {\bibfnamefont {S.}~\bibnamefont
  {Goto}},\ }\bibfield  {title} {\bibinfo {title} {{KEGG: Kyoto Encyclopedia of
  Genes and Genomes}},\ }\href {https://doi.org/10.1093/nar/28.1.27} {\bibfield
   {journal} {\bibinfo  {journal} {Nucleic Acids Research}\ }\textbf {\bibinfo
  {volume} {28}},\ \bibinfo {pages} {27} (\bibinfo {year} {2000})}\BibitemShut
  {NoStop}%
\bibitem [{\citenamefont {Jeong}\ \emph {et~al.}(2000)\citenamefont {Jeong},
  \citenamefont {Tombor}, \citenamefont {Albert}, \citenamefont {Oltvai},\ and\
  \citenamefont {Barab{\'a}si}}]{jeong2000large}%
  \BibitemOpen
  \bibfield  {author} {\bibinfo {author} {\bibfnamefont {H.}~\bibnamefont
  {Jeong}}, \bibinfo {author} {\bibfnamefont {B.}~\bibnamefont {Tombor}},
  \bibinfo {author} {\bibfnamefont {R.}~\bibnamefont {Albert}}, \bibinfo
  {author} {\bibfnamefont {Z.~N.}\ \bibnamefont {Oltvai}},\ and\ \bibinfo
  {author} {\bibfnamefont {A.-L.}\ \bibnamefont {Barab{\'a}si}},\ }\bibfield
  {title} {\bibinfo {title} {The large-scale organization of metabolic
  networks},\ }\href {https://doi.org/https://doi.org/10.1093/nar/28.1.27}
  {\bibfield  {journal} {\bibinfo  {journal} {Nature}\ }\textbf {\bibinfo
  {volume} {407}},\ \bibinfo {pages} {651} (\bibinfo {year}
  {2000})}\BibitemShut {NoStop}%
\bibitem [{\citenamefont {Ravasz}\ \emph {et~al.}(2002)\citenamefont {Ravasz},
  \citenamefont {Somera}, \citenamefont {Mongru}, \citenamefont {Oltvai},\ and\
  \citenamefont {Barab{\'a}si}}]{ravasz2002hierarchical}%
  \BibitemOpen
  \bibfield  {author} {\bibinfo {author} {\bibfnamefont {E.}~\bibnamefont
  {Ravasz}}, \bibinfo {author} {\bibfnamefont {A.~L.}\ \bibnamefont {Somera}},
  \bibinfo {author} {\bibfnamefont {D.~A.}\ \bibnamefont {Mongru}}, \bibinfo
  {author} {\bibfnamefont {Z.~N.}\ \bibnamefont {Oltvai}},\ and\ \bibinfo
  {author} {\bibfnamefont {A.-L.}\ \bibnamefont {Barab{\'a}si}},\ }\bibfield
  {title} {\bibinfo {title} {Hierarchical organization of modularity in
  metabolic networks},\ }\href {https://doi.org/10.1126/science.107337}
  {\bibfield  {journal} {\bibinfo  {journal} {science}\ }\textbf {\bibinfo
  {volume} {297}},\ \bibinfo {pages} {1551} (\bibinfo {year}
  {2002})}\BibitemShut {NoStop}%
\bibitem [{\citenamefont {Yi}\ \emph {et~al.}(2000)\citenamefont {Yi},
  \citenamefont {Huang}, \citenamefont {Simon},\ and\ \citenamefont
  {Doyle}}]{yi2000robust}%
  \BibitemOpen
  \bibfield  {author} {\bibinfo {author} {\bibfnamefont {T.-M.}\ \bibnamefont
  {Yi}}, \bibinfo {author} {\bibfnamefont {Y.}~\bibnamefont {Huang}}, \bibinfo
  {author} {\bibfnamefont {M.~I.}\ \bibnamefont {Simon}},\ and\ \bibinfo
  {author} {\bibfnamefont {J.}~\bibnamefont {Doyle}},\ }\bibfield  {title}
  {\bibinfo {title} {Robust perfect adaptation in bacterial chemotaxis through
  integral feedback control},\ }\href
  {https://doi.org/https://doi.org/10.1073/pnas.97.9.4649} {\bibfield
  {journal} {\bibinfo  {journal} {Proceedings of the National Academy of
  Sciences}\ }\textbf {\bibinfo {volume} {97}},\ \bibinfo {pages} {4649}
  (\bibinfo {year} {2000})}\BibitemShut {NoStop}%
\bibitem [{\citenamefont {Xiao}\ and\ \citenamefont {Doyle}(2018)}]{8619101}%
  \BibitemOpen
  \bibfield  {author} {\bibinfo {author} {\bibfnamefont {F.}~\bibnamefont
  {Xiao}}\ and\ \bibinfo {author} {\bibfnamefont {J.~C.}\ \bibnamefont
  {Doyle}},\ }\bibfield  {title} {\bibinfo {title} {Robust perfect adaptation
  in biomolecular reaction networks},\ }in\ \href
  {https://doi.org/10.1109/CDC.2018.8619101} {\emph {\bibinfo {booktitle} {2018
  IEEE Conference on Decision and Control (CDC)}}}\ (\bibinfo {year} {2018})\
  pp.\ \bibinfo {pages} {4345--4352}\BibitemShut {NoStop}%
\bibitem [{\citenamefont {Araujo}\ and\ \citenamefont
  {Liotta}(2018)}]{araujo2018topological}%
  \BibitemOpen
  \bibfield  {author} {\bibinfo {author} {\bibfnamefont {R.~P.}\ \bibnamefont
  {Araujo}}\ and\ \bibinfo {author} {\bibfnamefont {L.~A.}\ \bibnamefont
  {Liotta}},\ }\bibfield  {title} {\bibinfo {title} {The topological
  requirements for robust perfect adaptation in networks of any size},\ }\href
  {https://doi.org/https://doi.org/10.1038/s41467-018-04151-6} {\bibfield
  {journal} {\bibinfo  {journal} {Nature communications}\ }\textbf {\bibinfo
  {volume} {9}},\ \bibinfo {pages} {1} (\bibinfo {year} {2018})}\BibitemShut
  {NoStop}%
\bibitem [{\citenamefont {Wang}\ \emph {et~al.}(2021)\citenamefont {Wang},
  \citenamefont {Huang}, \citenamefont {Antoneli},\ and\ \citenamefont
  {Golubitsky}}]{wang2021structure}%
  \BibitemOpen
  \bibfield  {author} {\bibinfo {author} {\bibfnamefont {Y.}~\bibnamefont
  {Wang}}, \bibinfo {author} {\bibfnamefont {Z.}~\bibnamefont {Huang}},
  \bibinfo {author} {\bibfnamefont {F.}~\bibnamefont {Antoneli}},\ and\
  \bibinfo {author} {\bibfnamefont {M.}~\bibnamefont {Golubitsky}},\ }\bibfield
   {title} {\bibinfo {title} {The structure of infinitesimal homeostasis in
  input--output networks},\ }\href
  {https://doi.org/https://doi.org/10.1007/s00285-021-01614-1} {\bibfield
  {journal} {\bibinfo  {journal} {Journal of mathematical biology}\ }\textbf
  {\bibinfo {volume} {82}},\ \bibinfo {pages} {1} (\bibinfo {year}
  {2021})}\BibitemShut {NoStop}%
\bibitem [{\citenamefont {Gupta}\ and\ \citenamefont
  {Khammash}(2022)}]{Gupta2022.02.01.478605}%
  \BibitemOpen
  \bibfield  {author} {\bibinfo {author} {\bibfnamefont {A.}~\bibnamefont
  {Gupta}}\ and\ \bibinfo {author} {\bibfnamefont {M.}~\bibnamefont
  {Khammash}},\ }\bibfield  {title} {\bibinfo {title} {Universal structural
  requirements for maximal robust perfect adaptation in biomolecular
  networks},\ }\bibfield  {journal} {\bibinfo  {journal} {bioRxiv}\ }\href
  {https://doi.org/10.1101/2022.02.01.478605} {10.1101/2022.02.01.478605}
  (\bibinfo {year} {2022})\BibitemShut {NoStop}%
\bibitem [{\citenamefont {Okada}\ and\ \citenamefont
  {Mochizuki}(2016)}]{PhysRevLett.117.048101}%
  \BibitemOpen
  \bibfield  {author} {\bibinfo {author} {\bibfnamefont {T.}~\bibnamefont
  {Okada}}\ and\ \bibinfo {author} {\bibfnamefont {A.}~\bibnamefont
  {Mochizuki}},\ }\bibfield  {title} {\bibinfo {title} {Law of localization in
  chemical reaction networks},\ }\href
  {https://doi.org/10.1103/PhysRevLett.117.048101} {\bibfield  {journal}
  {\bibinfo  {journal} {Phys. Rev. Lett.}\ }\textbf {\bibinfo {volume} {117}},\
  \bibinfo {pages} {048101} (\bibinfo {year} {2016})}\BibitemShut {NoStop}%
\bibitem [{\citenamefont {Okada}\ and\ \citenamefont
  {Mochizuki}(2017)}]{PhysRevE.96.022322}%
  \BibitemOpen
  \bibfield  {author} {\bibinfo {author} {\bibfnamefont {T.}~\bibnamefont
  {Okada}}\ and\ \bibinfo {author} {\bibfnamefont {A.}~\bibnamefont
  {Mochizuki}},\ }\bibfield  {title} {\bibinfo {title} {Sensitivity and network
  topology in chemical reaction systems},\ }\href
  {https://doi.org/10.1103/PhysRevE.96.022322} {\bibfield  {journal} {\bibinfo
  {journal} {Phys. Rev. E}\ }\textbf {\bibinfo {volume} {96}},\ \bibinfo
  {pages} {022322} (\bibinfo {year} {2017})}\BibitemShut {NoStop}%
\bibitem [{\citenamefont {Hirono}\ \emph {et~al.}(2021)\citenamefont {Hirono},
  \citenamefont {Okada}, \citenamefont {Miyazaki},\ and\ \citenamefont
  {Hidaka}}]{PhysRevResearch.3.043123}%
  \BibitemOpen
  \bibfield  {author} {\bibinfo {author} {\bibfnamefont {Y.}~\bibnamefont
  {Hirono}}, \bibinfo {author} {\bibfnamefont {T.}~\bibnamefont {Okada}},
  \bibinfo {author} {\bibfnamefont {H.}~\bibnamefont {Miyazaki}},\ and\
  \bibinfo {author} {\bibfnamefont {Y.}~\bibnamefont {Hidaka}},\ }\bibfield
  {title} {\bibinfo {title} {Structural reduction of chemical reaction networks
  based on topology},\ }\href
  {https://doi.org/10.1103/PhysRevResearch.3.043123} {\bibfield  {journal}
  {\bibinfo  {journal} {Phys. Rev. Research}\ }\textbf {\bibinfo {volume}
  {3}},\ \bibinfo {pages} {043123} (\bibinfo {year} {2021})}\BibitemShut
  {NoStop}%
\bibitem [{\citenamefont {Mochizuki}\ and\ \citenamefont
  {Fiedler}(2015)}]{MOCHIZUKI2015189}%
  \BibitemOpen
  \bibfield  {author} {\bibinfo {author} {\bibfnamefont {A.}~\bibnamefont
  {Mochizuki}}\ and\ \bibinfo {author} {\bibfnamefont {B.}~\bibnamefont
  {Fiedler}},\ }\bibfield  {title} {\bibinfo {title} {Sensitivity of chemical
  reaction networks: A structural approach. 1. examples and the carbon
  metabolic network},\ }\href
  {https://doi.org/https://doi.org/10.1016/j.jtbi.2014.10.025} {\bibfield
  {journal} {\bibinfo  {journal} {Journal of Theoretical Biology}\ }\textbf
  {\bibinfo {volume} {367}},\ \bibinfo {pages} {189 } (\bibinfo {year}
  {2015})}\BibitemShut {NoStop}%
\bibitem [{\citenamefont {Fiedler}\ and\ \citenamefont
  {Mochizuki}(2015)}]{doi:10.1002/mma.3436}%
  \BibitemOpen
  \bibfield  {author} {\bibinfo {author} {\bibfnamefont {B.}~\bibnamefont
  {Fiedler}}\ and\ \bibinfo {author} {\bibfnamefont {A.}~\bibnamefont
  {Mochizuki}},\ }\bibfield  {title} {\bibinfo {title} {Sensitivity of chemical
  reaction networks: a structural approach. 2. regular monomolecular systems},\
  }\href {https://doi.org/10.1002/mma.3436} {\bibfield  {journal} {\bibinfo
  {journal} {Mathematical Methods in the Applied Sciences}\ }\textbf {\bibinfo
  {volume} {38}},\ \bibinfo {pages} {3519} (\bibinfo {year}
  {2015})}\BibitemShut {NoStop}%
\bibitem [{\citenamefont {Brehm}\ and\ \citenamefont
  {Fiedler}(2018)}]{doi:10.1002/mma.4668}%
  \BibitemOpen
  \bibfield  {author} {\bibinfo {author} {\bibfnamefont {B.}~\bibnamefont
  {Brehm}}\ and\ \bibinfo {author} {\bibfnamefont {B.}~\bibnamefont
  {Fiedler}},\ }\bibfield  {title} {\bibinfo {title} {Sensitivity of chemical
  reaction networks: A structural approach. 3. regular multimolecular
  systems},\ }\href {https://doi.org/10.1002/mma.4668} {\bibfield  {journal}
  {\bibinfo  {journal} {Mathematical Methods in the Applied Sciences}\ }\textbf
  {\bibinfo {volume} {41}},\ \bibinfo {pages} {1344} (\bibinfo {year}
  {2018})}\BibitemShut {NoStop}%
\bibitem [{\citenamefont {Orth}\ \emph {et~al.}(2010)\citenamefont {Orth},
  \citenamefont {Thiele},\ and\ \citenamefont {Palsson}}]{orth2010flux}%
  \BibitemOpen
  \bibfield  {author} {\bibinfo {author} {\bibfnamefont {J.~D.}\ \bibnamefont
  {Orth}}, \bibinfo {author} {\bibfnamefont {I.}~\bibnamefont {Thiele}},\ and\
  \bibinfo {author} {\bibfnamefont {B.~{\O}.}\ \bibnamefont {Palsson}},\
  }\bibfield  {title} {\bibinfo {title} {What is flux balance analysis?},\
  }\href {https://doi.org/https://doi.org/10.1038/nbt.1614} {\bibfield
  {journal} {\bibinfo  {journal} {Nature biotechnology}\ }\textbf {\bibinfo
  {volume} {28}},\ \bibinfo {pages} {245} (\bibinfo {year} {2010})}\BibitemShut
  {NoStop}%
\bibitem [{\citenamefont {Famili}\ and\ \citenamefont
  {Palsson}(2003)}]{FAMILI200316}%
  \BibitemOpen
  \bibfield  {author} {\bibinfo {author} {\bibfnamefont {I.}~\bibnamefont
  {Famili}}\ and\ \bibinfo {author} {\bibfnamefont {B.~O.}\ \bibnamefont
  {Palsson}},\ }\bibfield  {title} {\bibinfo {title} {The convex basis of the
  left null space of the stoichiometric matrix leads to the definition of
  metabolically meaningful pools},\ }\href
  {https://doi.org/https://doi.org/10.1016/S0006-3495(03)74450-6} {\bibfield
  {journal} {\bibinfo  {journal} {Biophysical Journal}\ }\textbf {\bibinfo
  {volume} {85}},\ \bibinfo {pages} {16} (\bibinfo {year} {2003})}\BibitemShut
  {NoStop}%
\bibitem [{\citenamefont {Ishii}\ \emph {et~al.}(2007)\citenamefont {Ishii},
  \citenamefont {Nakahigashi}, \citenamefont {Baba}, \citenamefont {Robert},
  \citenamefont {Soga}, \citenamefont {Kanai}, \citenamefont {Hirasawa},
  \citenamefont {Naba}, \citenamefont {Hirai}, \citenamefont {Hoque} \emph
  {et~al.}}]{ishii2007multiple}%
  \BibitemOpen
  \bibfield  {author} {\bibinfo {author} {\bibfnamefont {N.}~\bibnamefont
  {Ishii}}, \bibinfo {author} {\bibfnamefont {K.}~\bibnamefont {Nakahigashi}},
  \bibinfo {author} {\bibfnamefont {T.}~\bibnamefont {Baba}}, \bibinfo {author}
  {\bibfnamefont {M.}~\bibnamefont {Robert}}, \bibinfo {author} {\bibfnamefont
  {T.}~\bibnamefont {Soga}}, \bibinfo {author} {\bibfnamefont {A.}~\bibnamefont
  {Kanai}}, \bibinfo {author} {\bibfnamefont {T.}~\bibnamefont {Hirasawa}},
  \bibinfo {author} {\bibfnamefont {M.}~\bibnamefont {Naba}}, \bibinfo {author}
  {\bibfnamefont {K.}~\bibnamefont {Hirai}}, \bibinfo {author} {\bibfnamefont
  {A.}~\bibnamefont {Hoque}}, \emph {et~al.},\ }\bibfield  {title} {\bibinfo
  {title} {Multiple high-throughput analyses monitor the response of e. coli to
  perturbations},\ }\href {https://doi.org/10.1126/science.1132067} {\bibfield
  {journal} {\bibinfo  {journal} {Science}\ }\textbf {\bibinfo {volume}
  {316}},\ \bibinfo {pages} {593} (\bibinfo {year} {2007})}\BibitemShut
  {NoStop}%
\bibitem [{\citenamefont {Gianchandani}\ \emph {et~al.}(2010)\citenamefont
  {Gianchandani}, \citenamefont {Chavali},\ and\ \citenamefont
  {Papin}}]{https://doi.org/10.1002/wsbm.60}%
  \BibitemOpen
  \bibfield  {author} {\bibinfo {author} {\bibfnamefont {E.~P.}\ \bibnamefont
  {Gianchandani}}, \bibinfo {author} {\bibfnamefont {A.~K.}\ \bibnamefont
  {Chavali}},\ and\ \bibinfo {author} {\bibfnamefont {J.~A.}\ \bibnamefont
  {Papin}},\ }\bibfield  {title} {\bibinfo {title} {The application of flux
  balance analysis in systems biology},\ }\href
  {https://doi.org/https://doi.org/10.1002/wsbm.60} {\bibfield  {journal}
  {\bibinfo  {journal} {WIREs Systems Biology and Medicine}\ }\textbf {\bibinfo
  {volume} {2}},\ \bibinfo {pages} {372} (\bibinfo {year} {2010})}\BibitemShut
  {NoStop}%
\bibitem [{\citenamefont {Raman}\ and\ \citenamefont
  {Chandra}(2009)}]{10.1093/bib/bbp011}%
  \BibitemOpen
  \bibfield  {author} {\bibinfo {author} {\bibfnamefont {K.}~\bibnamefont
  {Raman}}\ and\ \bibinfo {author} {\bibfnamefont {N.}~\bibnamefont
  {Chandra}},\ }\bibfield  {title} {\bibinfo {title} {{Flux balance analysis of
  biological systems: applications and challenges}},\ }\href
  {https://doi.org/10.1093/bib/bbp011} {\bibfield  {journal} {\bibinfo
  {journal} {Briefings in Bioinformatics}\ }\textbf {\bibinfo {volume} {10}},\
  \bibinfo {pages} {435} (\bibinfo {year} {2009})}\BibitemShut {NoStop}%
\bibitem [{\citenamefont {Kauffman}\ \emph {et~al.}(2003)\citenamefont
  {Kauffman}, \citenamefont {Prakash},\ and\ \citenamefont
  {Edwards}}]{KAUFFMAN2003491}%
  \BibitemOpen
  \bibfield  {author} {\bibinfo {author} {\bibfnamefont {K.~J.}\ \bibnamefont
  {Kauffman}}, \bibinfo {author} {\bibfnamefont {P.}~\bibnamefont {Prakash}},\
  and\ \bibinfo {author} {\bibfnamefont {J.~S.}\ \bibnamefont {Edwards}},\
  }\bibfield  {title} {\bibinfo {title} {Advances in flux balance analysis},\
  }\href {https://doi.org/https://doi.org/10.1016/j.copbio.2003.08.001}
  {\bibfield  {journal} {\bibinfo  {journal} {Current Opinion in
  Biotechnology}\ }\textbf {\bibinfo {volume} {14}},\ \bibinfo {pages} {491}
  (\bibinfo {year} {2003})}\BibitemShut {NoStop}%
\bibitem [{\citenamefont {Kacser}\ and\ \citenamefont
  {Burns}(1973)}]{kacser1973control}%
  \BibitemOpen
  \bibfield  {author} {\bibinfo {author} {\bibfnamefont {H.}~\bibnamefont
  {Kacser}}\ and\ \bibinfo {author} {\bibfnamefont {J.}~\bibnamefont {Burns}},\
  }\bibfield  {title} {\bibinfo {title} {The control of flux.},\ }in\
  \href@noop {} {\emph {\bibinfo {booktitle} {Symposia of the Society for
  Experimental Biology}}},\ Vol.~\bibinfo {volume} {27}\ (\bibinfo {year}
  {1973})\ p.~\bibinfo {pages} {65}\BibitemShut {NoStop}%
\bibitem [{\citenamefont {Fell}(1992)}]{fell1992metabolic}%
  \BibitemOpen
  \bibfield  {author} {\bibinfo {author} {\bibfnamefont {D.~A.}\ \bibnamefont
  {Fell}},\ }\bibfield  {title} {\bibinfo {title} {Metabolic control analysis:
  a survey of its theoretical and experimental development},\ }\href
  {https://doi.org/10.1042/bj2860313} {\bibfield  {journal} {\bibinfo
  {journal} {Biochemical Journal}\ }\textbf {\bibinfo {volume} {286}},\
  \bibinfo {pages} {313} (\bibinfo {year} {1992})}\BibitemShut {NoStop}%
\bibitem [{\citenamefont {Szathmary}(1993)}]{szathmary1993deleterious}%
  \BibitemOpen
  \bibfield  {author} {\bibinfo {author} {\bibfnamefont {E.}~\bibnamefont
  {Szathmary}},\ }\bibfield  {title} {\bibinfo {title} {Do deleterious
  mutations act synergistically? metabolic control theory provides a partial
  answer.},\ }\href
  {https://doi.org/https://doi.org/10.1093/genetics/133.1.127} {\bibfield
  {journal} {\bibinfo  {journal} {Genetics}\ }\textbf {\bibinfo {volume}
  {133}},\ \bibinfo {pages} {127} (\bibinfo {year} {1993})}\BibitemShut
  {NoStop}%
\bibitem [{\citenamefont {MacLean}(2010)}]{maclean2010predicting}%
  \BibitemOpen
  \bibfield  {author} {\bibinfo {author} {\bibfnamefont {R.}~\bibnamefont
  {MacLean}},\ }\bibfield  {title} {\bibinfo {title} {Predicting epistasis: an
  experimental test of metabolic control theory with bacterial transcription
  and translation},\ }\href
  {https://doi.org/https://doi.org/10.1111/j.1420-9101.2009.01888.x} {\bibfield
   {journal} {\bibinfo  {journal} {Journal of evolutionary biology}\ }\textbf
  {\bibinfo {volume} {23}},\ \bibinfo {pages} {488} (\bibinfo {year}
  {2010})}\BibitemShut {NoStop}%
\bibitem [{\citenamefont {Bagheri}\ and\ \citenamefont
  {Wagner}(2004)}]{bagheri2004evolution}%
  \BibitemOpen
  \bibfield  {author} {\bibinfo {author} {\bibfnamefont {H.~C.}\ \bibnamefont
  {Bagheri}}\ and\ \bibinfo {author} {\bibfnamefont {G.~P.}\ \bibnamefont
  {Wagner}},\ }\bibfield  {title} {\bibinfo {title} {Evolution of dominance in
  metabolic pathways},\ }\href
  {https://doi.org/https://doi.org/10.1534/genetics.104.028696} {\bibfield
  {journal} {\bibinfo  {journal} {Genetics}\ }\textbf {\bibinfo {volume}
  {168}},\ \bibinfo {pages} {1713} (\bibinfo {year} {2004})}\BibitemShut
  {NoStop}%
\end{thebibliography}

\begin{thebibliography}{3}%
\makeatletter
\providecommand \@ifxundefined [1]{%
 \@ifx{#1\undefined}
}%
\providecommand \@ifnum [1]{%
 \ifnum #1\expandafter \@firstoftwo
 \else \expandafter \@secondoftwo
 \fi
}%
\providecommand \@ifx [1]{%
 \ifx #1\expandafter \@firstoftwo
 \else \expandafter \@secondoftwo
 \fi
}%
\providecommand \natexlab [1]{#1}%
\providecommand \enquote  [1]{``#1''}%
\providecommand \bibnamefont  [1]{#1}%
\providecommand \bibfnamefont [1]{#1}%
\providecommand \citenamefont [1]{#1}%
\providecommand \href@noop [0]{\@secondoftwo}%
\providecommand \href [0]{\begingroup \@sanitize@url \@href}%
\providecommand \@href[1]{\@@startlink{#1}\@@href}%
\providecommand \@@href[1]{\endgroup#1\@@endlink}%
\providecommand \@sanitize@url [0]{\catcode `\\12\catcode `\$12\catcode
  `\&12\catcode `\#12\catcode `\^12\catcode `\_12\catcode `\%12\relax}%
\providecommand \@@startlink[1]{}%
\providecommand \@@endlink[0]{}%
\providecommand \url  [0]{\begingroup\@sanitize@url \@url }%
\providecommand \@url [1]{\endgroup\@href {#1}{\urlprefix }}%
\providecommand \urlprefix  [0]{URL }%
\providecommand \Eprint [0]{\href }%
\providecommand \doibase [0]{https://doi.org/}%
\providecommand \selectlanguage [0]{\@gobble}%
\providecommand \bibinfo  [0]{\@secondoftwo}%
\providecommand \bibfield  [0]{\@secondoftwo}%
\providecommand \translation [1]{[#1]}%
\providecommand \BibitemOpen [0]{}%
\providecommand \bibitemStop [0]{}%
\providecommand \bibitemNoStop [0]{.\EOS\space}%
\providecommand \EOS [0]{\spacefactor3000\relax}%
\providecommand \BibitemShut  [1]{\csname bibitem#1\endcsname}%
\let\auto@bib@innerbib\@empty
\bibitem [{\citenamefont {Okada}\ and\ \citenamefont
  {Mochizuki}(2016)}]{PhysRevLett.117.048101SM}%
  \BibitemOpen
  \bibfield  {author} {\bibinfo {author} {\bibfnamefont {T.}~\bibnamefont
  {Okada}}\ and\ \bibinfo {author} {\bibfnamefont {A.}~\bibnamefont
  {Mochizuki}},\ }\bibfield  {title} {\bibinfo {title} {Law of localization in
  chemical reaction networks},\ }\href
  {https://doi.org/10.1103/PhysRevLett.117.048101} {\bibfield  {journal}
  {\bibinfo  {journal} {Phys. Rev. Lett.}\ }\textbf {\bibinfo {volume} {117}},\
  \bibinfo {pages} {048101} (\bibinfo {year} {2016})}\BibitemShut {NoStop}%
\bibitem [{\citenamefont {Okada}\ and\ \citenamefont
  {Mochizuki}(2017)}]{PhysRevE.96.022322SM}%
  \BibitemOpen
  \bibfield  {author} {\bibinfo {author} {\bibfnamefont {T.}~\bibnamefont
  {Okada}}\ and\ \bibinfo {author} {\bibfnamefont {A.}~\bibnamefont
  {Mochizuki}},\ }\bibfield  {title} {\bibinfo {title} {Sensitivity and network
  topology in chemical reaction systems},\ }\href
  {https://doi.org/10.1103/PhysRevE.96.022322} {\bibfield  {journal} {\bibinfo
  {journal} {Phys. Rev. E}\ }\textbf {\bibinfo {volume} {96}},\ \bibinfo
  {pages} {022322} (\bibinfo {year} {2017})}\BibitemShut {NoStop}%
\bibitem [{\citenamefont {Hirono}\ \emph {et~al.}(2021)\citenamefont {Hirono},
  \citenamefont {Okada}, \citenamefont {Miyazaki},\ and\ \citenamefont
  {Hidaka}}]{PhysRevResearch.3.043123SM}%
  \BibitemOpen
  \bibfield  {author} {\bibinfo {author} {\bibfnamefont {Y.}~\bibnamefont
  {Hirono}}, \bibinfo {author} {\bibfnamefont {T.}~\bibnamefont {Okada}},
  \bibinfo {author} {\bibfnamefont {H.}~\bibnamefont {Miyazaki}},\ and\
  \bibinfo {author} {\bibfnamefont {Y.}~\bibnamefont {Hidaka}},\ }\bibfield
  {title} {\bibinfo {title} {Structural reduction of chemical reaction networks
  based on topology},\ }\href
  {https://doi.org/10.1103/PhysRevResearch.3.043123} {\bibfield  {journal}
  {\bibinfo  {journal} {Phys. Rev. Research}\ }\textbf {\bibinfo {volume}
  {3}},\ \bibinfo {pages} {043123} (\bibinfo {year} {2021})}\BibitemShut
  {NoStop}%
\end{thebibliography}
%

\end{document}